\newcommand*{\rom}[1]{\expandafter\@slowromancap\romannumeral #1@}
\renewcommand{\b}[1]{\textbf{#1}}
\newtheorem{example}{Example}
\newtheorem{theorem}{Theorem}
\newtheorem{lemma}{Lemma}
\newtheorem{observation}{Observation}
\newcommand{\gapms}{{\footnotesize GAP-BS}}
\newcommand{\tie}{truthful}
\newcommand{\lpe}{LP[E]}
\newcommand{\bin}{bin}
\newcommand{\bins}{bins}
\newcommand{\job}{item}
\newcommand{\jobs}{items}
\newcommand{\E}{\mathbb{E}}
\journal{Operations Research Letters}
\begin{document}

\begin{frontmatter}



\title{Generalized Assignment Problem: Truthful Mechanism Design without Money}


\author{Salman Fadaei\corref{cor1}}
\author{Martin Bichler\corref{cor2}}

\address{Department of Informatics, Technical University of Munich, \\ salman.fadaei@gmail.com, bichler@in.tum.de}

\cortext[cor1]{Corresponding author}
\begin{abstract}
In this paper, we study a mechanism design problem for a strategic variant of the generalized assignment problem (GAP) in a both payment-free and prior-free environment.
In GAP, a set of items has to be optimally assigned to a set of bins without exceeding the capacity of any singular bin. 
In the strategic variant of the problem we study, bins are held by strategic agents, and each agent may hide its compatibility with some items in order to obtain items of higher values. 
The compatibility between an agent and an item encodes the willingness of the agent to receive the item.
Our goal is to maximize total value (sum of agents' values, or social welfare) while certifying no agent can benefit from hiding its compatibility with items. 
The model has applications in auctions with budgeted bidders.
\end{abstract}

\begin{keyword}
Mechanism Design without Money \sep Generalized Assignment Problem \sep Truthfulness \sep Approximation
\end{keyword}

\end{frontmatter}

\section{Introduction} \label{intro}


Truthful mechanism design without money under general preferences is a classic topic in social choice theory.
Truthfulness ensures that no agent can be better off by manipulating its true preferences.
When searching for truthful mechanisms \emph{without money}, one has to look at restricted domains of preferences. 
The reason for this, is the Gibbard-Satterthwaite theorem which states that any truthful social choice function which selects an outcome among three or more alternatives has to be trivially aligned with the preference of a single agent (namely, the dictator) \citep{Gibbard73,Satterthwaite75}.
Thus, exploring domains for which there exist truthful mechanisms is of central importance in the field of social choice theory.

As an example for restricted domains, consider agents with single-peaked preferences. In this domain returning the \textit{median} of the peaks determines a truthful social choice \citep{moulin1980strategy}.
Another example is the two-sided matching,  in which a set of men has a strict preference ordering over a set of women, and vice versa.
A matching is an assignment of men to women where each side is assigned to only one element of the other side. 
The \emph{deferred acceptance algorithm} finds a stable matching which is truthful for the proposing side, but not necessarily truthful for the other side \citep{roth1992two}.

One way to circumvent the impossibility result is relaxing the social choice function.
Procaccia and Tennenholtz introduced the technique of welfare approximation as a means to derive truthful approximation mechanisms without money \citep{procaccia2013approximate}. 
This type of approximation is not meant to handle computational intractability, but a method to achieve truthfulness by relaxing the goal of optimizing social welfare (approximating social welfare), and thus circumventing the Gibbard-Satterthwaite impossibility theorem. 
The approach is to maximize welfare without considering incentives, and refer to this as optimal value. 
Then it is said that a truthful mechanism returns (at most) an $\alpha$-approximation of the optimal if its value is always greater than or equal to $1/\alpha$ times the optimal value ($\alpha \geq 1$).
Several works, subsequent to the work of Procaccia and Tennenholtz, employ this technique \citep{dughmi2010truthful,chen2013truthful,koutsoupias2014scheduling}. 
We apply this technique to a novel strategic setting in the following.

\subsection{Model}
Consider a strategic variant of the generalized assignment problem termed \gapms\ in an environment which is both prior-free and payment-free.
In \gapms, there are $m$ items $J$ and $n$ bins (knapsacks) $I$. Each bin $i$ has a capacity $C_i$ and associates a value $v_{ij}$ and a size $w_{ij}$ to any item $j$. 
A feasible assignment may allocate a subset of items $S$ to bin $i$ such that $\sum_{j \in S}w_{ij} \leq C_i$.
A feasible assignment may assign each item at most once.

In \gapms, we assume tuple $T=(\{v_{ij}\}_{ij},\{w_{ij}\}_{ij},\{C_i\}_i)$ is public, but each bin is held by a strategic agent.
The private information that each agent/bin holds is the set of its compatible items. 
The compatibility between an agent and an item encodes the willingness of the agent to receive the item.
In particular, consider a bipartite graph $G$ where one side corresponds to items and the other side corresponds to bins. 
The edges of $G$, $E \subseteq I\times J$ represent the compatible item-bin pairs.
The private type of a bin $i$ is therefore the set of edges in the graph incident on $i$, i.e. $E_i$. 
A bin $i$ receives value $v_i(S)=\sum_{j\in S:(i,j)\in E_i} v_{ij}$ from package $S$ if $\sum_{j\in S} w_{ij}<C_i$ and $0$, otherwise.
The total value of a feasible assignment $(S_1,S_2,\ldots,S_n)$ equals the sum of values received by the bins from the assignment: $\sum_{i \in I} v_i(S_i)$.
We seek a total value-maximizing algorithm that provides each bin $i$ with incentives to truthfully report its compatible items $E_i$ rather than any $E'_i \subset E_i$.\footnote{In fact, our results certify that each bin $i$ reports exactly $E_i$, and has no incentives to report any other set of edges $E'_i$. However, for the sake of simplicity in the exposition of the results, we focus on untruthful reports that are made by hiding some edges, $E'_i \subset E_i$.}
In other words, given a truthful mechanism, bins have no incentive to hide their compatibility with some items.

Let $\mathcal{A}$ denote a randomized algorithm which takes instance $(T,E)$ and computes $X \in \{0,1\}^E$, an assignment of items to bins. 
Notice, the assignment itself is a deterministic assignment (each bin receives a deterministic set of items), 
but algorithm $\mathcal{A}$ is internally randomized, i.e., $\mathcal{A}$ returns a solution which is randomly chosen according to a probability distribution over feasible assignments.
Thus, the computed assignment may change by running $\mathcal{A}$, twice on the same input.
Randomized algorithm $\mathcal{A}$, given any tuple $T$, should satisfy the following properties.

\begin{enumerate} [label={\textit{\roman*}.}] 
\item (feasibility)  $\forall j \in J$, $Pr[\sum_{i\in I} X_{ij}\leq 1]=1$ and $\forall i \in I$, $Pr[\sum_{j \in J} w_{ij}X_{ij}\leq C_i]=1$, where $X \sim \mathcal{A}(T,E)$, for all $E$.

\item (incentive compatibility, or truthfulness)  for all $i$, $E_i$, $E_{-i}$, and any reported $E'_i \subset E_i$, we have $\E[\sum_{j:(i,j)\in E_i}v_{ij} X_{ij}] \geq \E[\sum_{j:(i,j)\in E_i}v_{ij}X'_{ij}]$, where $X \sim \mathcal{A}(T,E)$, and $X'\sim \mathcal{A}(T,E'_i\cup E_{-i})$. 
\end{enumerate}
$E_{-i}$ always denotes $E\setminus E_i$. 
The expectation in $ii$ is taken over the coin flips of the algorithm. 
Note that, the expected value of the bin in both cases is calculated with respect to true item-bin compatibilities, $E$.
We remark that condition $ii$ characterizes mechanisms that are dominant strategy incentive compatible. 
In this paper, for brevity, we refer to these mechanisms as truthful mechanisms or algorithms.
To sum, our objective is to propose a randomized algorithm $\mathcal{A}$ for \gapms\ which is truthful, and always returns a feasible assignment whose value approximates the optimal total value as high as possible.

Many real-world decision problems can be modeled by variants of knapsack problems, therefore we believe that our model can be applied broadly.
As an example, we refer to the  \emph{maximum budgeted allocations} (MBA) problem \citep{chakrabarty2010approximability}.
In MBA, a set of indivisible items has to be assigned to a set of bidders.
Each bidder $i$ reports her willingness to pay $b_{ij}$ for item $j$ by bidding for the item, while she has a budget constraint $B_i$.
Each bidder $i$ on receiving a package $S$ of items, pays $\sum_{j\in S} b_{ij}$. 
Each bidder $i$ has the rigid constraint $B_i$ on her payment.
The goal in MBA is to find a distribution of items among the bidders which maximizes the total revenue (the sum of the payments by the bidders while respecting their budget constraints).
MBA arises in auctions with budgeted bidders and has several applications \citep{chakrabarty2010approximability}.

In MBA, bidders want to get as much as they can without spending more than their budget.
For instance, advertisers wish to maximize the impressions, clicks, or sales generated by their advertising, subject to budget constraints.
Similarly, bidders who have no direct utility for leftover money (e.g. because the money comes from a corporate budget) will buy as much as possible.
This types of bidders are called \emph{value maximizers}, and have recently drawn the attention of researchers in mechanism design \citep{rad16valuebidders, fadaei2016value}.

Consider a strategic variant of MBA in which each bidder,  in order to obtain a more valuable package of items, strategizes in the following way. 
Each bidder may strategically hide her interests in buying some items by not bidding for those items.
In this setting, the auctioneer wishes to certify that each bidder truthfully reveals her willingness to buy items.
In other words, a truthful mechanism in this setting will encourage participation of the bidders in the auction.
We model this setting by \gapms\ in which each bidder is represented by a bin, budgets $B_i$ by capacities $C_i$, the bids $b_{ij}$ by the values of bins for the items $v_{ij}$, and the payment by a bidder $i$ for item $j$ by the weight of the item  on the bin, $w_{ij}$. 
Thus, in this setting of \gapms, we have $v_{ij}=w_{ij}$ for all $i$ and $j$.
For this problem, since the value density of each item is the same over all bins, we provide a truthful $4$-approximation algorithm.

\subsection{Discussion About the Assumptions} \label{subsec:discuss-assumptions}
Aside from the applications of the model discussed above, we emphasize that our assumptions (which imply a highly structured domain) are necessary to escape the impossibility results such as the Gibbard-Satterthwaite theorem and its variations \citep{barbera1990strategy}.
For example, we resort to welfare approximations because as stated by Theorem \ref{thm-low-bounds}, no deterministic (or randomized) algorithm whose value is optimal, exists for \gapms.
The lower bounds in Theorem \ref{thm-low-bounds} were  derived for a different setting with strategic items in the literature, however, we can reproduce and adapt the theorem for our setting.
\begin{theorem}\citep{dughmi2010truthful} \label{thm-low-bounds} 
No truthful deterministic algorithm with an approximation ratio better than $2$ exists for \gapms.
Moreover, no truthful-in-expectation randomized algorithm with an approximation ratio better than $1.09$ exists for \gapms.
\end{theorem}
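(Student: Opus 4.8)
The plan is to adapt the lower-bound instances of \citep{dughmi2010truthful}, which were stated for strategic \jobs, to our strategic-\bin\ setting by exploiting the symmetry of the underlying bipartite graph $G$: swapping the two sides turns a \job\ that hides its compatibilities into a \bin\ that hides edges of $E_i$. Since \gapms\ contains the unit-capacity, unit-size case (every $C_i=1$ and $w_{ij}=1$), in which each \bin\ receives at most one \job\ and the problem degenerates to maximum-weight bipartite matching, it suffices to exhibit hard instances there; a lower bound on this special case is a lower bound for the general problem. Throughout I would use only the incentive-compatibility inequality (condition $ii$) evaluated on a couple of closely related reports, together with the fact that $\mathcal{A}$ may assign a \bin\ only \jobs\ it has declared compatible.

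For the deterministic bound I would take two \bins\ $A,B$ and two \jobs\ $g,h$, with $v_{Ag}=v_{Bg}=1$ and $v_{Ah}=v_{Bh}=1-\epsilon$, and declare both \bins\ truly compatible with both \jobs. On the truthful profile $I_0$ the optimum is $2-\epsilon$. If $\mathcal{A}(I_0)$ leaves either \bin\ unmatched, its welfare is at most $1$ and the ratio is already $2-\epsilon$; so assume $\mathcal{A}(I_0)$ matches both \bins, which up to relabelling means $g$ goes to $A$ and $h$ to $B$. Now let $B$ hide its edge to $h$, reporting $E'_B=\{(B,g)\}$, and call the new profile $I_1$. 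Condition $ii$ forces $v_B(\mathcal{A}(I_1))\le v_B(\mathcal{A}(I_0))=1-\epsilon$ (values taken with respect to $B$'s true type), and since on $I_1$ the \bin\ $B$ has declared only $g$, awarding $g$ to $B$ would give it true value $v_{Bg}=1>1-\epsilon$; hence $\mathcal{A}$ must \emph{not} give $g$ to $B$ on $I_1$. Consequently $B$ receives nothing and $A$, having unit capacity, captures at most one \job, so the welfare of $\mathcal{A}(I_1)$ is at most $1$, whereas the optimum of $I_1$ is $A\to h$, $B\to g$ with value $2-\epsilon$. The ratio is thus at least $2-\epsilon$, and letting $\epsilon\to 0$ rules out any ratio below $2$; the placement $g\to B$, $h\to A$ is handled symmetrically by swapping the names $A,B$ and $g,h$ and letting $A$ deviate.

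For the randomized bound the same gadget no longer forbids a single allocation; instead I would let $\mathcal{A}$ randomize and track the probabilities with which each \job\ is assigned to each \bin\ on the relevant profiles. Because the value in condition $ii$ is linear in the assignment and the expectation is over $\mathcal{A}$'s coins, the truthfulness constraints become linear inequalities in these probabilities, and the achievable expected welfare on the hard profile becomes the value of a small linear program whose constraints are exactly the incentive and feasibility conditions over a symmetric family of instances. Equivalently, one applies Yao's principle with a suitably symmetric distribution over inputs. Optimizing the parameters of this program (the \job\ values and the prior weights) yields the claimed $1.09$ bound on the best ratio of any truthful-in-expectation mechanism.

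The deterministic step is essentially a short case analysis once the gadget is fixed, so the real work is the randomized bound. The difficulty is that a randomized mechanism can symmetrize its coin flips---for example, sending $g$ to $A$ or to $B$ with equal probability on the symmetric profile---which slackens the incentive inequalities and destroys the clean ``forbidden move'' used in the deterministic argument; the welfare loss then reappears only as a probabilistic trade-off. Pinning down the \emph{tight} constant therefore requires choosing the instance family and the input distribution so that this trade-off is as unfavourable as possible and solving the resulting optimization exactly. A secondary point to verify is that the adaptation from strategic \jobs\ to strategic \bins\ is faithful despite the asymmetry of the two sides (each \job\ is assigned at most once, whereas a \bin\ may hold several \jobs\ up to its capacity); restricting to the unit-capacity matching case, as above, makes the two sides interchangeable and removes this concern.
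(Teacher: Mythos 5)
Your deterministic argument is correct and is essentially the paper's own proof: the paper uses the same $2\times 2$ unit-capacity, unit-size gadget in which both bins are compatible with both items, one item being slightly more valuable than the other (the paper takes values $1$ and $x>1$ with $x\to 1$, you take $1$ and $1-\epsilon$ with $\epsilon\to 0$), and the same ``forbidden move'' logic: after the deviator hides its edge to the weaker item, truthfulness forbids the mechanism from awarding it the stronger item, so the deviator gets nothing and welfare collapses to half the optimum.

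The randomized half, however, is a genuine gap, and it is the half that carries the specific constant $1.09$. You only describe a plan---write the incentive and feasibility constraints as a small LP in the assignment probabilities, ``or equivalently apply Yao's principle,'' then ``optimize the parameters''---but you never fix the instance, never write down a single incentive inequality, and never perform the optimization that produces $1.09$; asserting that the optimization ``yields the claimed bound'' assumes exactly what has to be proved. Moreover, Yao's principle is not the right tool here: no distribution over inputs is involved, and the binding constraint is an incentive inequality linking two reporting profiles, not average-case hardness. The paper closes this step concretely with the same gadget (values $1$ and $x>1$): on the truthful profile, item $B$ is assigned to some bin with probability at most $1/2$ (it can go to at most one of the two bins), say bin $1$, so bin $1$'s truthful expected utility is at most $\frac{x+1}{2}$; if after bin $1$ hides its edge to item $A$ the mechanism gives it $B$ with probability $q$, truthfulness forces $qx \le \frac{x+1}{2}$, i.e.\ $q \le \frac{x+1}{2x}$; the manipulated profile then has expected welfare $x+q$ against an optimum of $x+1$, so the ratio is at least $\frac{x+1}{x+q} \ge \frac{x+1}{x+\frac{x+1}{2x}}$, and choosing $x=1+\sqrt{2}$ gives $1+\frac{1}{4\sqrt{2}+5}\approx 1.094$. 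Without this (or an equivalent) explicit derivation and the one-variable optimization over $x$, your proposal establishes the deterministic bound of $2$ but not the randomized bound of $1.09$.
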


Now, we consider a setting in which bins/agents have private values for items.
This setting is more general than \gapms\ in that, in this setting, the agents can manipulate their valuations for items.
This is in contrast to \gapms\ in which the agents can only hide their valuations for some items by hiding their compatibility with those items.
For this general setting, no deterministic (or randomized) truthful algorithm, with an interesting approximation ratio, exists.
To see this, consider a simple market with one item, and a set of agents.
This market is equivalent to the single-item auction, but without money.
We observe that no mechanism without money can find the (true) highest valuation for the item, as the agents can report arbitrarily high values for the item.
That is, no truthful algorithm can do any better than the algorithm which allocates the item to the bin which is uniformly chosen at random.
Such an algorithm provides a trivial approximation ratio of $1/n$, $n$ being the number of agents.

In a parallel setting, Dughmi et al. \cite{dughmi2010truthful} and Chen et al. \cite{chen2013truthful} studied GAP in an environment in which \emph{items} are held by strategic agents.
This is in contrast to our assumption that \emph{bins} are held by strategic agents. 
Hence, the solutions proposed by these authors are not directly applicable to \gapms.
In GAP each item can be assigned only once, thus the setting studied by Dughmi et al. is appropriate for modeling single-demand bidders who are interested in buying only a single item. However, our model analyzes strategic bins which can model multi-demand bidders, i.e., bidders who are interested in buying multiple items. In particular, the sizes in our model are at the side of strategic agents which properly models the bidders' budgets in MBA problem.

\subsection{Results and Technique}
In addition to \gapms, we also analyze two variants, namely the multiple knapsack problem in which each item has the same size and value over bins, 
and density-invariant GAP in which each item has the same value density (value per size) over the bins.

We observe that the relaxation and rounding technique is applicable to these problems.
The relaxation and rounding technique is a welfare approximation technique  \citep{procaccia2013approximate} based on linear programming relaxations.
To apply the technique, we start with a linear programming relaxation of the problem. 
Then, we need an algorithm which returns a fractional solution to the relaxation with an acceptable approximation ratio.
The algorithm has to be fractionally truthful, i.e., no agent can increase its fractional value by untruthful reports.
Finally, a rounding scheme which preserves truthfulness is applied to the fractional solution to obtain an integer solution. 
It should be noted that the relaxation and rounding technique has been previously applied to mechanism design without money in a different setting \citep{dughmi2010truthful}. 
This fact is realized in Theorem \ref{thm-frac-int-sol}.

We apply the technique successfully to our problems by proposing fractionally truthful algorithms with acceptable approximation ratios.
For the rounding scheme, we use a rounding method called \emph{randomized meta-rounding}, originally proposed by Carr and Vempala \cite{carr2000randomized}, and later applied by Lavi and Swamy \cite{lavi2011truthful} to mechanism design (with quasi-linear valuations).
Using the relaxation and rounding technique, for two variants of \gapms, the multiple knapsack problem, and density-invariant GAP, we propose truthful $4$-approximation algorithms. 
For \gapms, we show an $O(\ln{(U/L)})$-approximation mechanism  where $U$ and $L$ are the upper and lower bounds for value densities of the compatible item-bin pairs.

\section{Generalized Assignment Problem} \label{sec-gap}
We start with a linear programming relaxation of \gapms.
  \begin{align}
    \text{Maximize} \quad & \textstyle \sum_{i=1}^n \sum_{j=1}^m v_{ij}x_{ij}  \tag{\lpe} \\
	 \text{subject to} \quad & \textstyle \sum_{i=1}^n x_{ij} \leq 1 \quad \forall j \in J \notag\\
	 &\textstyle \sum_{j=1}^m w_{ij}x_{ij} \leq C_i \quad \forall i \in I \notag \\
	 &x_{ij} \geq 0 \quad \forall i, j  \notag \\
	 &x_{ij} = 0 \quad \forall (i,j) \notin E \notag.
  \end{align}

Our technique is as follows. 
We design a \textit{fractionally truthful} approximation algorithm which returns a feasible solution to \lpe. 
A fractionally truthful algorithm allocates fractional assignments to \bins, and no \bin\ can improve its fractional value by an untruthful report.
In particular, a fractionally truthful algorithm $\mathcal{A}^F$ takes $(T,E)$ and returns $x \in [0,1]^E$, a feasible solution to \lpe\ with the following property.
For each bin $i$, if the bin reports $E'_i \subset E_i$, we will have $\sum_{j:(i,j)\in E}v_{ij} x_{ij} \geq \sum_{j:(i,j)\in E}v_{ij}x'_{ij}$, where $x'= \mathcal{A}^F(T,E'_i\cup E_{-i})$ and $E_{-i}=E\setminus E_i$.
Next, we round the fractional solution using a special rounding technique which makes sure that each \bin\ obtains a fixed fraction of its fractional value in expectation. 
The \textit{randomized meta-rounding} is  capable of maintaining this fixed fraction. 

To use the randomized meta-rounding, we have to scale down the fractional solution by factor $2$, which is essentially the integrality gap of the \lpe\ \citep{shmoys1993approximation}.
Assuming $x^*=\mathcal{A}^F(T,E)$, the randomized meta-rounding represents $x^*/2$ as a convex combination of polynomially-many feasible integer solutions. 
Looking at the provided convex combination as a probability distribution over integer solutions, we sample a randomized solution $X$ which is always feasible, 
and its expected value is $1/2$ of the fractional value of $x^*$.
This is confirmed by Theorem \ref{thm-frac-int-sol} from the literature.

\begin{theorem} \citep{dughmi2010truthful} \label{thm-frac-int-sol}
If there exists a fractionally truthful $\alpha$-approximation algorithm for \gapms, then there exists a \tie\ $(2\alpha)$-approximation solution for \gapms.
\end{theorem}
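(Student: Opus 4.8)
The plan is to convert the given fractionally truthful $\alpha$-approximation algorithm $\mathcal{A}^F$ into a randomized, \tie\ $(2\alpha)$-approximation algorithm $\mathcal{A}$ by rounding its fractional output while preserving, up to a uniform scaling factor, the per-edge allocation probabilities. Concretely, I would first run $\mathcal{A}^F$ on the reported instance to obtain $x^\ast=\mathcal{A}^F(T,E)$, a feasible point of \lpe\ whose value is at least $1/\alpha$ times the optimum. Because the integrality gap of \lpe\ is $2$ (with a matching integral rounding routine for GAP available), I would then invoke the randomized meta-rounding of Carr and Vempala to express the scaled point $x^\ast/2$ as a convex combination $\sum_k \lambda_k X^k$ of polynomially many integer feasible assignments $X^k\in\{0,1\}^E$, with $\lambda_k\ge 0$ and $\sum_k\lambda_k=1$; this decomposition is produced in polynomial time by running the ellipsoid method with the GAP rounding routine serving as the separation oracle. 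Algorithm $\mathcal{A}$ then outputs $X=X^k$ with probability $\lambda_k$.

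Two properties follow almost immediately from the construction. First, every atom $X^k$ is a feasible assignment, so $\mathcal{A}$ always returns a feasible solution, and sampling according to $(\lambda_k)_k$ gives $\E[X_{ij}]=(x^\ast/2)_{ij}=x^\ast_{ij}/2$ for every edge $(i,j)$. Second, by linearity of expectation the total expected value equals $\tfrac12\sum_{i,j} v_{ij} x^\ast_{ij}$, which is at least $\tfrac{1}{2\alpha}$ times the optimal total value; hence $\mathcal{A}$ is a $(2\alpha)$-approximation.

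It remains to verify truthfulness, which is where the exact marginal-preservation of the rounding does the real work. Fix a \bin\ $i$ with true type $E_i$ and any understatement $E'_i\subset E_i$. When $i$ reports truthfully, its expected value, computed with respect to the true values, is $\sum_{j:(i,j)\in E_i} v_{ij}\,\E[X_{ij}]=\tfrac12\sum_{j:(i,j)\in E_i} v_{ij}\,x^\ast_{ij}$; when it reports $E'_i$, running $\mathcal{A}^F$ on $E'_i\cup E_{-i}$ and rounding the result yields $x'$ with $\E[X'_{ij}]=x'_{ij}/2$, so the corresponding expected value is $\tfrac12\sum_{j:(i,j)\in E_i} v_{ij}\,x'_{ij}$. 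The fractional-truthfulness guarantee of $\mathcal{A}^F$ states exactly that $\sum_{j:(i,j)\in E_i} v_{ij}\,x^\ast_{ij}\ge \sum_{j:(i,j)\in E_i} v_{ij}\,x'_{ij}$, and dividing by $2$ preserves the inequality, which establishes condition $ii$.

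The main obstacle I anticipate lies in justifying the meta-rounding step rather than in the truthfulness bookkeeping: I must confirm that \lpe\ for GAP genuinely admits the factor-$2$ convex decomposition, i.e.\ that a polynomial-time $2$-approximate integral rounding exists to play the role of the oracle, and---critically---that the Carr--Vempala scheme reproduces the fractional marginals \emph{exactly} rather than merely approximately, since it is this exactness that lets the linear fractional-truthfulness inequality pass through the expectation unchanged. Ensuring that the capacity and single-assignment constraints hold in every atom $X^k$, so that the sampled $X$ is feasible with probability $1$, is part of the same verification.
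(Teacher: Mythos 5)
Your proposal is correct and follows essentially the same route as the paper's own proof: scale the fractionally truthful solution $x^\ast$ by the integrality gap of $2$, apply Carr--Vempala randomized meta-rounding to obtain a convex decomposition into feasible integer assignments, sample from it so that $\E[X_{ij}]=x^\ast_{ij}/2$ exactly, and then derive both the $(2\alpha)$-approximation and truthfulness by linearity of expectation from the exact marginal preservation and the fractional-truthfulness inequality. The concerns you flag at the end (existence of the factor-$2$ decomposition and exactness of the marginals) are precisely what the paper delegates to the cited results of Shmoys--Tardos, Carr--Vempala, and Lavi--Swamy.
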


\subsection{Multiple Knapsack Problem} \label{subsec:mkp}
We consider a variant of \gapms\ in which neither the size nor the value of each \job\ depends on the bins. Formally, for each \job\ $j$ we have $v_{ij}=v_j$ and $w_{ij}=w_j$ for all \bins\ $i$. 
First, we observe an algorithm that returns a (fractional) optimal solution to \lpe\ is not fractionally truthful. For more details, we provide Example \ref{mkp-example} relegated to the Appendix.

We propose Algorithm \ref{alg:mkp}.
We choose \bin\ $i$ in an arbitrary order and (fractionally) assign compatible \jobs\ to it according to the decreasing order of value densities of \jobs\ $v_{j}/w_{j}$ until the capacity of the \bin\ is exhausted or all compatible \jobs\ are exhausted. Then we proceed to the next \bin\ with remaining (fractional) \jobs. 

\RestyleAlgo{boxruled}
\begin{algorithm}
\SetAlgoNoLine


	1. Sort \jobs\ according to the decreasing order of value densities $v_{j}/w_{j}$, breaking ties arbitrarily.
	
2. \ForEach {\bin\ $i$ chosen in an arbitrary order} {
	for each unassigned (fractional) \job\ $j$ where $(i,j) \in E$ in the order defined above, fractionally assign as much of the \job\ to \bin\ $i$ until the \job\ is exhausted or the \bin\ is full.
}

\Return the resulting assignment $x$.
\caption{Multiple Knapsack Problem.}
\label{alg:mkp}
\end{algorithm}

Algorithm \ref{alg:mkp} is fractionally truthful. 
It is known that assigning \jobs\ according to decreasing order of value densities, when fractional assignments are allowed, produces the highest fractional value for the \bin. Since \bins\ wish to maximize their values, and the algorithm is aligned with this goal, the bins have no incentive to lie, and the algorithm is fractionally truthful.
For example, consider a bin with capacity $M \gg 1$, and two items $1$ and $2$ such that $v_1=1+\epsilon$, $w_1=1$, and $v_2=w_2=M$. Algorithm \ref{alg:mkp} assigns item $1$, and $\frac{M-1}{M}$-fraction of item $2$ to the bin, resulting in $1+\epsilon+\frac{M-1}{M}M=M+\epsilon$ value for the bin, the highest possible value. If fractional allocations were not allowed, only item $1$ could be assigned to the bin since the remaining capacity ($M-1$) is not sufficient to assign item $2$. Thus, in the absence of fractional allocations, the bin has incentives to hide its compatibility with item $1$ in order to obtain item $2$.

With regard to the total value, we show that Algorithm \ref{alg:mkp} returns a $2$-approximate fractional solution. 
We compare the outcome of the algorithm with the optimal solution to the LP formulation of the problem shown below.
  \begin{align}
    \text{Maximize}   & \textstyle \quad \sum_{i=1}^n \sum_{j=1}^m v_{j}x_{ij} \tag{MKP-LP[E]} \\
    \text{subject to} & \textstyle \quad \sum_{i=1}^n x_{ij} \leq 1, \ & \forall j \in J \notag \\
	 & \textstyle \quad  \sum_{j=1}^m w_{j}x_{ij} \leq C_i, \ & \forall i \in I \notag  \\
	 &\quad  x_{ij} \geq 0, \ & \forall i, j  \notag \\
	 &\quad  x_{ij} = 0, \ & \forall (i,j) \notin E.  \notag 
  \end{align}
Assignment $x$ computed by Algorithm \ref{alg:mkp} is a feasible assigment since each item is assigned only once, and the capacity of the bins are respected by the algorithm. Thus, $x$ belongs to the region of feasible solutions to MKP-LP[E].

\begin{lemma} \label{lem-mkp-ratio}
Algorithm \ref{alg:mkp} returns a $2$-approximation solution to MKP-LP[E].
\end{lemma}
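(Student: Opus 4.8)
The plan is to prove the ratio by LP duality (a dual-fitting argument): I will read a feasible solution to the dual of the linear program MKP-LP[E] directly off the execution of Algorithm \ref{alg:mkp}, and show its objective is at most twice the value $g=\sum_{i,j}v_j x_{ij}$ of the assignment $x$ that the algorithm returns. Since $x$ is feasible, weak duality then gives $\mathrm{OPT}\le 2g$, which is exactly the claimed guarantee. Write $\rho_j=v_j/w_j$ for the value density of \job\ $j$. Attaching multipliers $\alpha_j\ge 0$ to the constraints $\sum_i x_{ij}\le 1$ and $\beta_i\ge 0$ to the constraints $\sum_j w_j x_{ij}\le C_i$, the dual is to minimize $\sum_j \alpha_j+\sum_i C_i\beta_i$ subject to $\alpha_j+w_j\beta_i\ge v_j$ for every $(i,j)\in E$, with $\alpha_j,\beta_i\ge 0$.

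Next I would define the dual variables from the run of the algorithm. For every \bin\ $i$ that ends up \emph{full}, set $\beta_i=\rho_i^{*}$, where $\rho_i^{*}$ is the density of the \emph{critical} \job, i.e. the \job\ whose (partial) assignment exhausted the capacity of \bin\ $i$; set $\beta_i=0$ for \bins\ that are not full. For every \job\ $j$ that is \emph{fully} assigned (that is, $\sum_i x_{ij}=1$) set $\alpha_j=v_j$, and set $\alpha_j=0$ for all other \jobs. All variables are clearly nonnegative.

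The step I expect to be the main obstacle is verifying dual feasibility, which reduces to one structural claim about the algorithm: \emph{if a \job\ $j$ is not fully assigned, then every compatible \bin\ $i$ is full and satisfies $\rho_i^{*}\ge\rho_j$.} I would argue this from the two-level structure of the loop. Within each \bin\ the \jobs\ are scanned in decreasing density order, and the unassigned fraction of any \job\ only decreases over time. Hence, when a compatible \bin\ $i$ reaches $j$ in its inner loop, if it still had spare capacity it would absorb the entire remaining fraction of $j$; so if some fraction of $j$ survives to the end, no compatible \bin\ can have finished with spare capacity, i.e. every compatible \bin\ is full. Moreover such a \bin\ reached its capacity either before $j$ in the density order (critical density $\ge\rho_j$) or while assigning $j$ itself (critical density $=\rho_j$), so $\rho_i^{*}\ge\rho_j$ in all cases. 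Given the claim, feasibility is immediate: for fully assigned $j$ we have $\alpha_j=v_j$ and the constraint holds since $\beta_i\ge0$; for a \job\ $j$ that is not fully assigned, $\alpha_j=0$ and $w_j\beta_i=w_j\rho_i^{*}\ge w_j\rho_j=v_j$ for each compatible $i$.

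Finally I would bound the dual objective by $2g$ by treating the two sums separately. Since each fully assigned \job\ contributes its whole value $v_j$ to $g$, we get $\sum_j\alpha_j=\sum_{j\text{ full}}v_j\le g$. For the capacity terms, every \job\ that a full \bin\ $i$ receives was assigned in decreasing density order down to the critical \job, so all of them have density at least $\rho_i^{*}$; hence the value \bin\ $i$ collects is $\sum_j v_j x_{ij}=\sum_j \rho_j w_j x_{ij}\ge \rho_i^{*}\sum_j w_j x_{ij}=\rho_i^{*}C_i=C_i\beta_i$, using that a full \bin\ has $\sum_j w_j x_{ij}=C_i$. Summing over full \bins\ gives $\sum_i C_i\beta_i\le g$. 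Adding the two bounds yields a dual objective of at most $2g$, and weak duality closes the argument. The only routine points left to check are the handling of ties in the density order and \jobs\ with no compatible \bin, neither of which affects the inequalities above.
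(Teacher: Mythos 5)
Your proposal is correct and follows essentially the same route as the paper's proof: a dual-fitting argument that sets the item variables to $v_j$ for exhausted items and the bin variables to the density of the last (critical) item assigned to each full bin, verifies feasibility via the decreasing-density order, and bounds each of the two dual terms by the algorithm's value before invoking weak duality. The only difference is presentational: you organize the feasibility check around items that are not fully assigned (isolating it as a structural claim), while the paper cases on whether each bin is full, but the underlying argument is identical.
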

\begin{proof}
We will construct a feasible dual solution with a value at most twice the value obtained by
the algorithm, then by calling the weak duality theorem, the claim will follow.
Assume $x$ is the outcome of Algorithm \ref{alg:mkp}. Using $x$ we can construct a feasible solution to the dual of MKP-LP[E] given below.
  \begin{align}
    \text{Minimize}   & \textstyle  \quad \sum_{j=1}^m p_j + \sum_{i=1}^n u_i C_i  \tag{MKP-LPD[E]}\\
     \text{subject to} & \textstyle \quad p_j+u_i w_{j} \geq v_{j}, \ & \forall (i,j) \in E \notag \\
	 &\quad  u_i \geq 0, \ & \forall i \notag \\
	 &\quad  p_j \geq 0, \ & \forall j. \notag
  \end{align}
Initially, let $p=\vec{0}$ and $u=\vec{0}$. If \job\ $j$ gets exhausted, set $p_j=v_{j}$. 
Furthermore, for all \textit{full} \bins\ $i$, set $u_i=v_j/w_j$, $j$ being the last \job\ (fractionally) assigned to $i$. 
We can observe that this satisfies the constraint corresponding to each edge $(i,j)$. 
In particular, if \bin\ $i$ is full, then for each $j$ incident on $i$, either $j$ gets exhausted with this assignment or does not. 
If $j$ is exhausted we have $p_j=v_j$ and therefore the constraint holds.
If $j$ is not exhausted, we have $v_j/w_j \leq u_i$ since \jobs\ are assigned in decreasing order of value density and thus the constraint holds. 
If \bin\ $i$ is not full, every \job\ $j$ which is assigned to it is exhausted by this assignment. 
That is we have $p_j=v_{j}$ and the constraint thus holds.
For every \job\ $j$ which is not assigned to the \bin\ but $(i,j) \in E$, we have $p_j = v_{j}$ since the \job\ is exhausted due to another assignment.
In sum, we have constructed a feasible dual solution using $x$.

Now, we bound the value of the dual solution with respect to the primal solution. First, we observe that $\sum_{i,j} v_{j}x_{ij} \geq \sum_j p_j \sum_i x_{ij}$, since $p_j=v_j$ if $j$ is fully exhausted and $p_j=0$, otherwise. Second, $\sum_{i,j} v_{j}x_{ij}=\sum_i \sum_j \frac{v_{j}}{w_{j}} (w_{j}x_{ij}) \geq \sum_i u_i \sum_j (w_{ij}x_{ij})$, since if $x_{ij}>0$ then $v_j/w_j \geq u_i$. Therefore, we obtain

\begin{displaymath}
\begin{array}{ll}

2\sum_{i,j} v_{j}x_{ij} &\geq  \sum_j p_j \sum_i x_{ij} + \sum_i u_i \sum_j (w_{j}x_{ij}) \\
 & = \sum_j p_j + \sum_i u_i C_i

\end{array}
\end{displaymath}

Notice, only for \jobs\ $j$ which get exhausted ($\sum_i x_{ij}=1$) we have $p_j>0$ and only for full \bins\ ($\sum_j w_{j}x_{ij}=C_i$) we have $u_i>0$.
The final term is the value of the dual, the desired conclusion.
\end{proof}


Finally, we call Theorem \ref{thm-frac-int-sol} and obtain the following.

\begin{theorem}
There exists a \tie\ $4$-approximation mechanism for the multiple knapsack problem in our model.
\end{theorem}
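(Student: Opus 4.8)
The plan is to assemble the theorem from three ingredients already in hand: the fractional truthfulness of Algorithm~\ref{alg:mkp}, the approximation guarantee of Lemma~\ref{lem-mkp-ratio}, and the generic rounding statement of Theorem~\ref{thm-frac-int-sol}. First I would observe that the multiple knapsack problem is precisely the restriction of \gapms\ to instances with $v_{ij}=v_j$ and $w_{ij}=w_j$, and that under this restriction the relaxation \lpe\ coincides with MKP-LP[E]. Hence any guarantee proved for Algorithm~\ref{alg:mkp} against MKP-LP[E] is a guarantee for a fractional solution of \lpe\ on an MKP instance, which is exactly the kind of object Theorem~\ref{thm-frac-int-sol} consumes.

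Next I would record that Algorithm~\ref{alg:mkp} is a fractionally truthful algorithm in the sense of $\mathcal{A}^F$ defined before the lemma: because it fills each bin greedily in decreasing order of value density $v_j/w_j$ over that bin's compatible items, it produces the value-maximizing fractional packing for each bin given the reports of the others, so no bin can raise its own fractional value by withholding edges, i.e.\ by reporting any $E'_i \subset E_i$. This is exactly the hypothesis on the fractional algorithm required by Theorem~\ref{thm-frac-int-sol}.

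Then I would fix the approximation parameter. Lemma~\ref{lem-mkp-ratio} gives a fractional solution of value at least $\tfrac{1}{2}$ the optimum of MKP-LP[E]. Since MKP-LP[E] is a relaxation of the integral assignment problem, its optimum dominates the optimal total (integral) value, so the same fractional solution is within a factor $2$ of the optimal total value as well. Thus Algorithm~\ref{alg:mkp} is a fractionally truthful $2$-approximation, i.e.\ $\alpha=2$. Feeding $\alpha=2$ into Theorem~\ref{thm-frac-int-sol} yields a \tie\ $(2\cdot 2)=4$-approximation mechanism, which is the claim.

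The steps are routine, so the only point that needs care — and the closest thing to an obstacle — is making sure the benchmarks line up: the factor $2$ in Lemma~\ref{lem-mkp-ratio} is proved against the \emph{fractional} LP optimum via weak duality, whereas the theorem is stated relative to the optimal total value, and the two are reconciled only because the LP is a relaxation (LP optimum $\ge$ integer optimum). One should also confirm that the meta-rounding invoked inside Theorem~\ref{thm-frac-int-sol} preserves truthfulness bin-by-bin, but that is precisely what Theorem~\ref{thm-frac-int-sol} already asserts, so no fresh argument is needed at this stage.
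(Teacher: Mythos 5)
Your proposal is correct and follows exactly the paper's own route: it establishes that Algorithm~\ref{alg:mkp} is fractionally truthful, invokes Lemma~\ref{lem-mkp-ratio} for the factor-$2$ fractional guarantee, and then applies Theorem~\ref{thm-frac-int-sol} with $\alpha=2$ to obtain the truthful $4$-approximation. Your added remark about reconciling the LP benchmark with the integral optimum is a sensible (and correct) point of care that the paper leaves implicit.
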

 
\subsection{Truthful Mechanism for \gapms} \label{subsec:inv-val-dens}
Now, we attempt to design a truthful algorithm for \gapms, but first solve the problem with an additional assumption.
We assume that the \textit{value density} of each \job\ is the same over all \bins. 
More formally, there exists a value $d_j$ for each \job\ $j$ such that for all \bins\ $i$, we have $\frac{v_{ij}}{w_{ij}} = d_j$. 
This assumption will be relaxed in Subsection \ref{subsec-unequl-density}. 
We design a truthful $4$-approximation mechanism for \gapms\ under this extra assumption. 

The proposed algorithm can be viewed as a variant of the \emph{deferred acceptance algorithm} designed for matching marketplaces.
Each \job\ $j$ has a preference list $\mathcal{L}_j$ according to decreasing order of $v_{ij}$ where $(i,j) \in E$, breaking ties arbitrarily. 
The preference list of a \bin\ is defined according to the decreasing order of value densities.
Once a (fractional) \job\ and a \bin\ are matched, the assignment will never be broken.

\RestyleAlgo{boxruled}
\SetAlgoNoEnd
\begin{algorithm}
\SetAlgoNoLine

\KwData{Preference lists of the \jobs, $\{\mathcal{L}_j\}_{j}$.}
\KwResult{A feasible solution $x$ to \lpe.}

1. Sort \jobs\ according to their decreasing order of value densities $d_j$, breaking ties arbitrarily.

2. \ForEach {\job\ $j$ chosen according to the order above} {
Fractionally assign as much of the \job\ to the \bins\ chosen according to the order specified by $\mathcal{L}_j$, until the \job\ is exhausted or all the \bins\ in $\mathcal{L}_j$ are full.
}
\Return the resulting assignment $x$.

\caption{GAP with Equal Density}
\label{alg:equal-density}
\end{algorithm}
To show the approximation factor of the solution, we can construct a feasible dual solution whose value is at most twice the value obtained by Algorithm \ref{alg:equal-density}, then by calling the weak duality theorem, the following lemma holds.

\begin{lemma} \label{lem-equal-density-approx}
Algorithm \ref{alg:equal-density} returns a $2$-approximation solution to \lpe\ when each \job\ has the same value density over \bins.
\end{lemma}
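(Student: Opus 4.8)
The plan is to parallel the proof of Lemma \ref{lem-mkp-ratio}: construct a feasible solution to the dual of \lpe\ whose objective value is at most twice the value $\sum_{ij} v_{ij} x_{ij}$ attained by the assignment $x$ that Algorithm \ref{alg:equal-density} outputs, and then invoke weak duality. The dual has a variable $p_j \ge 0$ per \job\ and $u_i \ge 0$ per \bin, with one constraint $p_j + u_i w_{ij} \ge v_{ij}$ per edge $(i,j)\in E$; since $v_{ij}=d_j w_{ij}$, this reads $p_j \ge (d_j - u_i)\, w_{ij}$. Guided by the algorithm, I would set $u_i = d_{j^\ast}$ for every \textit{full} \bin\ $i$, where $j^\ast$ is the last \job\ (fractionally) assigned to $i$, and $u_i=0$ otherwise; and $p_j = \sum_i v_{ij} x_{ij}$ (the total value \job\ $j$ actually generates) for every \textit{exhausted} \job\ $j$, and $p_j=0$ otherwise. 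A preliminary check that $x$ is feasible for \lpe\ (each \job\ used at most once, capacities respected) is routine from the algorithm's description.

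The crux, and the step I expect to be the main obstacle, is verifying dual feasibility across all edges. I would split on the state of \bin\ $i$. If $i$ is full and $d_j \le d_{j^\ast}$, then $u_i w_{ij} = d_{j^\ast} w_{ij} \ge d_j w_{ij} = v_{ij}$ already. The delicate case is $i$ full with $d_j > d_{j^\ast}$ (and the analogous case of $i$ not full). Here I would first argue that \job\ $j$ must be \emph{exhausted}: because $i$ is refilled by the strictly lower-density \job\ $j^\ast$, which is processed after $j$, \bin\ $i$ still had free capacity when $j$ was processed, so had $j$ not been exhausted the algorithm would have filled $i$ during $j$'s turn, contradicting that $j^\ast$ is later assigned to $i$. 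Given that $j$ is exhausted, $p_j = \sum_{i'} v_{i'j} x_{i'j}$, and I would show $p_j \ge v_{ij}$: since $j$ either never reaches \bin\ $i$ in $\mathcal{L}_j$ or reaches it only as the last, partially-used \bin, every \bin\ $i'$ to which $j$ is assigned appears no lower than $i$ in $\mathcal{L}_j$ and hence has $v_{i'j}\ge v_{ij}$; as $\sum_{i'} x_{i'j}=1$, the convex combination gives $\sum_{i'} v_{i'j} x_{i'j} \ge v_{ij}$. Thus $p_j \ge v_{ij} \ge (d_j - d_{j^\ast}) w_{ij}$ and the constraint holds; the non-full case is identical with $u_i=0$.

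It then remains to bound the dual objective by establishing $\sum_j p_j \le \sum_{ij} v_{ij} x_{ij}$ and $\sum_i u_i C_i \le \sum_{ij} v_{ij} x_{ij}$ separately. The first is immediate, since $p_j>0$ only for exhausted \jobs\ and there $p_j$ equals exactly \job\ $j$'s contribution $\sum_i v_{ij} x_{ij}$. For the second I would use that a full \bin\ $i$ satisfies $C_i = \sum_j w_{ij} x_{ij}$, and that any \job\ $j$ with $x_{ij}>0$ is processed no later than the last \job\ $j^\ast$ assigned to $i$, so $d_j \ge d_{j^\ast} = u_i$; hence $u_i C_i = \sum_j u_i w_{ij} x_{ij} \le \sum_j d_j w_{ij} x_{ij} = \sum_j v_{ij} x_{ij}$, and summing over full \bins\ gives the bound. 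Adding the two inequalities shows the dual value is at most $2\sum_{ij} v_{ij} x_{ij}$, and weak duality then yields that $x$ is a $2$-approximation to \lpe.
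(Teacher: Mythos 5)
Your proposal is correct and takes essentially the same route as the paper: the same dual-fitting construction (dual of \lpe\ with $u_i$ set to the density of the last \job\ assigned to each full \bin), dual feasibility verified via the processing order of \jobs\ and the order of each list $\mathcal{L}_j$, and the same two-part bound of the dual objective by $2\sum_{ij} v_{ij}x_{ij}$ followed by weak duality. The only (harmless) difference is your choice $p_j=\sum_i v_{ij}x_{ij}$ for exhausted \jobs, where the paper instead sets $p_j$ to the value $v_{ij}$ of the edge at which \job\ $j$ was exhausted; both choices are dual-feasible by the same argument, since the exhausting edge is the least valuable edge receiving a positive fraction of $j$.
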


The truthfulness proof of Algorithm \ref{alg:equal-density} proceeds as follows. 
We first show that in an instance with $2$ \jobs\ and $2$ \bins\ ($2 \times 2$), truthfulness holds. This instance contains the core of the truthfulness proof for the general case. Truthfulness for simpler cases is trivial. 
A straightforward generalization of the argument for $2\times 2$ shows truthfulness for settings with $m$ \jobs\ and $2$ \bins\ ($2\times m$) for any $m>2$. For the general case of ($n \times m$) we provide an inductive argument.

In order to show that Algorithm \ref{alg:equal-density} is \textit{fractionally truthful}, we look at Algorithm \ref{alg:equal-density} as a variant of the deferred acceptance algorithm where \jobs\ propose capacities to \bins. 
In Step $2$ of the algorithm, we process items one by one.
For each item, we try to assign the item or part of the item to the bins according to the decreasing order of the item values for the bins.
To simplify the exposition of the proof, we view this process as items proposing to the bins. 
When \bin\ $i$ reveals its compatibility with item $j$, we view it as \bin\ $i$ accepting (possible) proposal by item $j$ as far as the capacity of the bin permits.
Similarly, \bin\ $i$ hiding its compatibility with item $j$ can be viewed as \bin\ $i$ rejecting (possible) proposals by item $j$, or equivalently not allowing item $j$ to propose to bin $i$.

\begin{lemma} \label{lem-2-2-truthfulness}
Algorithm \ref{alg:equal-density} is fractionally truthful for $2 \times 2$ settings.
\end{lemma}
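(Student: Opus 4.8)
The plan is to prove fractional truthfulness for the $2 \times 2$ case by exhaustive but structured case analysis, using the view of Algorithm~\ref{alg:equal-density} as a deferred-acceptance procedure where items propose to bins. Fix a bin $i \in \{1,2\}$ whose true type is $E_i$, and consider a deviation $E_i' \subset E_i$ obtained by hiding one or both of its compatible edges. Since there are only two items and two bins, the relevant manipulation is for bin $i$ to hide its compatibility with some item $j$ in the hope of being assigned a different (higher-value) item $j'$. The goal is to show $\sum_{j:(i,j)\in E_i} v_{ij} x_{ij} \geq \sum_{j:(i,j)\in E_i} v_{ij} x_{ij}'$, where $x$ and $x'$ are the outputs on the truthful and manipulated inputs respectively, and the sum is always taken over the \emph{true} edge set $E_i$.

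First I would set up notation: the two items are processed in decreasing order of their value densities $d_j$, say item $1$ before item $2$. For each item, bin $i$'s position in the preference list $\mathcal{L}_j$ is determined by $v_{ij}$. The key structural fact I would invoke is that hiding an edge $(i,j)$ only removes bin $i$ from the list $\mathcal{L}_j$; it does not change the order in which items are processed (that order depends on $d_j$, not on the reported edges) nor the preference lists of the other item. So the deviation is a purely local removal from one or both proposal lists.

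The main case to handle is the following: bin $i$ hides its compatibility with the higher-density item $1$ so that item $1$'s capacity flows entirely to the other bin, and in exchange bin $i$ hopes that when item $2$ is processed it now has free capacity to receive a larger fraction of item $2$. I would argue that any capacity bin $i$ frees up by rejecting item $1$ can, in the best case for the manipulator, be refilled only by item $2$, and since item $1$ was processed first precisely because $d_1 \geq d_2$, and the bin's value for the received fraction is (density)$\times$(size used), the bin cannot gain: the value per unit of capacity that it forgoes on item $1$ is at least the value per unit it could recover on item $2$. Concretely, I would compare the contribution $v_{i1}$ of the fraction of item $1$ the bin would have received truthfully against the \emph{additional} fraction of item $2$ it can now obtain, and show the former dominates because $v_{i1}/w_{i1} = d_1 \geq d_2 = v_{i2}/w_{i2}$ and the freed capacity is exactly the capacity item $1$ had occupied. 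I would then check the symmetric/degenerate subcases (hiding only item $2$, hiding both edges, the case where bin $i$ was never full) and note each is either immediate or subsumed, and that hiding an edge can never \emph{increase} the fraction of that same item received, only potentially shift capacity toward the lower-density item.

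The hard part will be bounding the exact fraction of item $2$ that bin $i$ recovers after the deviation, because this depends on how much of item $2$'s capacity is contested by the other bin and on whether the other bin was full in the truthful run. I expect the crux to be showing that the density ordering $d_1 \geq d_2$ forces the trade to be unprofitable regardless of the other bin's state, i.e.\ that the capacity-for-capacity exchange the manipulation induces is always at a (weakly) unfavorable density ratio for the deviating bin. Once this $2 \times 2$ inequality is established, the later lemmas extend it to $2 \times m$ and, by induction, to the general $n \times m$ case.
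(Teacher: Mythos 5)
Your plan follows the same skeleton as the paper's proof: view the algorithm as deferred acceptance with items proposing in density order, reduce to the case where the bin hides its edge to the earlier (higher-density) item $p$, and then show that the extra capacity of the later item $q$ the bin can attract is at most the capacity it gave up, so that $d_p \geq d_q$ makes the trade unprofitable. However, you leave the central inequality as a conjecture (``I expect the crux to be showing that\ldots''), and the idea needed to prove it is absent from your proposal. The gap is this: the extra fraction of item $q$ that reaches the deviating bin is \emph{mediated through the other bin} --- item $p$'s capacity flows to bin $\b{2}$, which may then reject part of item $q$, and only that rejected part comes back to bin $\b{1}$. Your phrase ``capacity-for-capacity exchange'' implicitly treats capacity as a conserved quantity as fractions move between bins, but in GAP it is not: a fraction $f$ of item $j$ occupies $w_{1j}f$ in one bin and $w_{2j}f$ in the other, and these sizes differ. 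The paper closes this hole with two quantitative facts: Observation~\ref{obs:inc-capacity} (a bin that receives an extra proposal of capacity $C^1$ earlier rejects at most $C^1$ more capacity from later proposers), and, crucially, Observation~\ref{obs:up-capacity} (when a rejected fraction of item $j$ moves from a bin to the next bin in $\mathcal{L}_j$, the capacity it proposes can only shrink, because equal densities plus $v_{1j}\geq v_{2j}$ force $w_{1j}\geq w_{2j}$). Chaining these gives $C'_{1q}-C_{1q} \leq C_{2q}-C'_{2q} \leq C_{2p} \leq C_{1p}$, which is exactly the bound you need. Observation~\ref{obs:up-capacity} is where the equal-density assumption actually does its work; without it the rejected fraction could \emph{expand} in capacity at the next bin and your inequality would fail, which is consistent with the fact that the lemma is false for general GAP.

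A second, smaller omission: your degenerate cases are dismissed as ``immediate or subsumed,'' but one of them carries content --- if in the truthful run the bin accepts only a fraction of item $p$'s proposal, then the bin is \emph{full} at density $d_p$, hence already at its maximum possible value $d_p C_i$, and no deviation can help; this is an argument, not a triviality. So the verdict is: right architecture, same route as the paper, but the proof is incomplete precisely at the step you yourself flag as the crux, and completing it requires the size-monotonicity-along-preference-lists fact that your proposal never identifies.
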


\begin{proof}
Let \b{1}, and \b{2} denote the \bins, and \b{p} and \b{q} denote the \jobs. 
Let us assume \textbf{p} precedes \textbf{q} in proposing to the \bins, i.e. $d_p \geq d_q$. 
Fix this order of proposing \jobs\ as well as the reports by \bin\ \b{2}.
We argue that bin \b{1} is never better off by hiding some of its edges $E_1$.
Showing the truthfulness for bin \b{2} is analogue.

Assume $(\b{1},\b{q}) \in E_1$. Then \bin\ \b{1} may receive a proposal from \b{q}, but obviously the \bin\ receives no proposal from \b{q} if the \bin\ reports $(\b{1},\b{q}) \notin E_1$. Thus, hiding compatibility with \b{q} might only make a loss for the \bin.

Now, we analyze the behavior of the algorithm for a similar change in report for \job\ \textbf{p}.
We need to show that when $(\b{1},\b{p}) \in E_1$ (case \rom{1}) the obtained value by the bin is at least as good as when $(\b{1},\b{p}) \notin E_1$ (case \rom{2}).
Then we conclude that when truely $(\b{1},\b{p}) \in E_1$, the \bin\ has no incentive to report $(\b{1},\b{p}) \notin E_1$.

In case \rom{1}, if only a fraction of the proposal by \b{p} is accepted by the \bin, then the \bin\ has become full by accepting a fraction of \b{p} (recall \textbf{p} precedes \textbf{q} in proposing to the \bins). 
Thus the obtained value by the \bin\ is maximum and can't be better off in case \rom{2}.
If in case \rom{1} no fraction of the proposal by \b{p} is accepted by \bin\ \b{1}, or if there is no proposal by \b{p} then there will be no improvement in the value of the bin in case \rom{2}, as well.
What remains is to show that the \bin\ cannot be better off in case \rom{2} when it accepts the proposal by \b{p} fully in case \rom{1}.

\begin{figure}[!htb]
    \centering

\begin{tikzpicture}[node distance = 15mm]
 \node [circle, draw, minimum size=0.8cm] (p) {\b{p}};
 \node [circle, draw, minimum size=0.8cm] (q) [below= of p] {\b{q}};

 \node [rectangle, draw, minimum size=1cm] (1) [right=of p] {\b{1}};
 \node [rectangle, draw, minimum size=1cm] (2) [right= of q] {\b{2}};

 \draw [<-] (1) to  
  node [xshift=7mm, yshift=2mm, very near end] {$C_{1p}$} (p);
 \draw [-] (2) 
 to node [xshift=3mm, yshift=0mm, very near end] {$0$} (p);
 \draw [<-] (1) to 
  node [xshift=0mm, yshift=4mm, very near end] {$C_{1q}$} (q);
 \draw [<-] (2) to 
  node [xshift=3mm, yshift=2mm, very near end] {$C_{2q}$} (q); 
\end{tikzpicture}

        (a) Case \rom{1}. \b{p} is exhausted when it is assigned to \b{1}. 

\begin{tikzpicture} [node distance = 15mm]
 \node [circle, draw, minimum size=0.8cm] (p) {\b{p}};
 \node [circle, draw, minimum size=0.8cm] (q) [below= of p] {\b{q}};

 \node [rectangle, draw, minimum size=1cm] (1) [right=of p] {\b{1}};
 \node [rectangle, draw, minimum size=1cm] (2) [right= of q] {\b{2}}; 
 
 \draw [<-] (2) to 
 node [xshift=3mm, yshift=1mm, very near end] {$C_{2p}$} (p);
 \draw [<-] (1) to   
 node [xshift=0mm, yshift=4mm, very near end] {$C'_{1q}$} (q);
 \draw [<-] (2) to
  node [xshift=3mm, yshift=2mm, very near end] {$C'_{2q}$} (q); 
\end{tikzpicture}

        (b) Case \rom{2}. \b{1} hides its compatibility with \b{p}. At least a fraction of \b{q} is assigned to \b{1} ($C'_{1q}>0$). 

        \caption{Two cases where the \bin\ is and is not on the preference list of the \job. The amount of proposed and accepted capacities are shown on the edges.}
            \label{fig-gap-base-example}
\end{figure}

This situation is depicted in Figure \ref{fig-gap-base-example}. 
In the figure, (a) and (b) correspond to case \rom{1} and case \rom{2}, respectively.
In the figure, $C_{ij}$ denotes the capacity proposed by \job\ $j$ to \bin\ $i$, which is accepted by the bin. 
Considering the information provided in Fig. \ref{fig-gap-base-example}, we need to show that $C'_{1q} \leq C_{1q}+C_{1p}$. This will mean, in case \rom{2}, the \bin\ actually receives less capacity from \jobs\ with less (or equal) value densities than in case \rom{1}, which in turn means a lower value for \bin\ \b{1}. 
Notice, to arrive at this inequality we used the assumption that the order of proposing \jobs\ is fixed in the two setups.
To show the inequality, we first observe two facts about Algorithm \ref{alg:equal-density}.

\begin{observation} \label{obs:inc-capacity}
If a set of \jobs\ together propose a capacity of $C^0\leq C$ to a bin with capacity $C$, the bin will accept the whole proposed capacity.
If we first let a capacity $C^1$ propose to the \bin\, and afterwards let the foregoing \jobs\ propose the capacity $C^0$, the \bin\ will reject a capacity of at most $C^1$ from the \jobs\ that propose after the first capacity.
\end{observation}
\begin{proof}
Assume $C^1$ and $C^0$ in order propose to the bin.
If the \bin\ gets full by accepting $C^1$ we must have $C^1\geq C$, then the \bin\ will reject exactly a capacity of $C^0$ of the next \jobs. 
Now because $C^0 \leq C \leq C^1$, the claim holds.
If not (the \bin\ still has an empty capacity of $C^E$ after accepting $C^1$), the \bin\ accepts $C^1$ fully and rejects an amount equal to $\max \{0,C^0-C^E\}$ from the next proposing capacities. 
We have $C^1+C^E=C \geq C^0$, therefore $C^0-C^E \leq C^1$.
Thus, in this case the rejected capacity will be upper bounded by $C^1$.
This completes the proof.
\end{proof}

\begin{observation} \label{obs:up-capacity}
Let \b{1} and \b{2} be two subsequent \bins\ in $\mathcal{L}_j$. If \bin\ \b{1} rejects the proposed capacity $C_{1j}$ by \job\ $j$ then, this is an upper bound to $C_{2j}$, the capacity that will be proposed by \job\ $j$ to \b{2}, i.e. $C_{1j} \geq C_{2j}$.
\end{observation}
\begin{proof}
First, we must have $w_{1j} \geq w_{2j}$ since $\frac{v_{1j}}{w_{1j}}=\frac{v_{2j}}{w_{2j}}$ by the assumption of equal density over \bins\ and $v_{1j}\geq v_{2j}$ as \b{1} precedes \b{2} in $\mathcal{L}_j$.
Rejecting $C_{1j}$ means that this fraction of the \job\ remains: $C_{1j}/w_{1j}$. Then what will be proposed to \b{2} is $C_{2j}=w_{2j} \cdot (C_{1j}/w_{1j}) \leq C_{1j}$.
This completes the proof.
\end{proof}

Back to the argument about cases \rom{1} and \rom{2}, we notice that in case \rom{2} there is an increase of amount $C_{2p}$ in the proposed capacity to \b{2} compared to case \rom{1}. 
The capacity rejected by bin \b{2} is thus upper bounded by $C_{2p}$ according to Observation \ref{obs:inc-capacity}. 
That means, $C_{2q}-C'_{2q} \leq C_{2p}$. 
Moreover, according to Observation \ref{obs:up-capacity}, the rejected capacity upper bounds the proposed capacity to the next \bin. 
Hence, we have $C'_{1q}-C_{1q} \leq C_{2q}-C'_{2q}$. 
Therefore, we obtain $C'_{1q} \leq C_{1q}+C_{2p} \leq C_{1q}+C_{1p}$. 
The last inequality holds again because of Observation \ref{obs:up-capacity} 
(see it as bin \b{1} rejecting $C_{1p}$, an upper bound to $C_{2p}$).
This completes the proof of Lemma \ref{lem-2-2-truthfulness}.
\end{proof}


A simple generalization of the argument for $2\times 2$ markets shows truthfulness for the $2 \times m$ markets with $m>2$. 
A useful observation here is that we only need to show that \bin\ \b{1} will always report $E_1$ rather than $E_1\setminus \{e_j\}$ for every $e_j \in E_1$ . 
If we show this, we have in fact shown that reporting $E_1$ is better than reporting $E_1\setminus \{e_j\}$. 
This also shows that reporting $E_1\setminus \{e_j\}$ is better than hiding one edge from $E_1\setminus \{e_j\}$, i.e. reporting $E_1\setminus \{e_j, e_{j'}\}$ and so on.
For the general case we provide an inductive argument.
We assume that in a $(n-1) \times m$ setting \bins\ are truthful and prove that in a $n \times m$ setting truthfulness holds as well.

\begin{lemma} \label{lem-ind-step-truthfulness}
If Algorithm \ref{alg:equal-density} is truthful for markets with $m$ \jobs\ and $n-1$ \bins, it will be truthful for $n \times m$ markets for $n\geq 3$, and $m\geq 2$.
\end{lemma}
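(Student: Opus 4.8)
The plan is to reduce the general $n \times m$ statement to the inductive hypothesis for $n-1$ \bins\ by tracking \emph{capacities} rather than values. First I would invoke the single-edge reduction recorded just before the lemma: it suffices to fix one deviating \bin, say \b{1}, and show that reporting its full edge set is weakly better than hiding exactly one edge $(\b{1},\b{p})$. The key simplification is that, under the equal-density assumption, the value \b{1} extracts is $\sum_j v_{1j}x_{1j}=\sum_j d_j c_{1j}$, where $c_{1j}:=w_{1j}x_{1j}$ is the capacity \job\ $j$ occupies in \b{1}; hence truthfulness becomes a statement about a density-weighted sum of occupied capacities. Because \jobs\ are processed in decreasing order of $d_j$ and the only altered edge is $(\b{1},\b{p})$, the truthful run (call it $A$) and the deviating run ($B$) are \emph{identical} up to the moment \b{p} is processed. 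Consequently $c_{1j}^A=c_{1j}^B$ for every \job\ $j$ with $d_j>d_p$, while $c_{1p}^B=0\le c_{1p}^A$.

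Next I would write the change in \b{1}'s value (run $B$ minus run $A$) as $-d_p c_{1p}^A+\sum_{j:d_j<d_p} d_j(c_{1j}^B-c_{1j}^A)$. Since $d_j\le d_p$ for every \job\ processed after \b{p}, it then suffices to prove that the \emph{total increase} of less-dense occupied capacity at \b{1} is bounded by the capacity it gives up on \b{p}, namely $\sum_{j:d_j<d_p}\max(c_{1j}^B-c_{1j}^A,0)\le c_{1p}^A$; weighting each such increment by its density (all at most $d_p$) and adding the loss $-d_p c_{1p}^A$ then yields a nonpositive change. This is exactly the $n$-\bin\ generalization of the single inequality $C'_{1q}\le C_{1q}+C_{1p}$ proved in Lemma \ref{lem-2-2-truthfulness} for the $2\times2$ case, where \b{q} was the only less-dense \job.

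Then comes the induction. The redirected capacity of \b{p} (the amount $c_{1p}^A$ that \b{1} no longer absorbs in run $B$) is pushed to the \bins\ following \b{1} in $\mathcal{L}_p$, and from there cascades onto less-dense \jobs. I would peel off one \bin\ lying on this cascade---most naturally the \bin\ that first receives \b{p}'s redirected flow---so that the remaining $n-1$ \bins, with capacities updated by that \bin's consumption of the denser \jobs, constitute a legitimate $(n-1)\times m$ instance. Observations \ref{obs:inc-capacity} and \ref{obs:up-capacity} bound the single interface step, exactly as in the base case: the extra capacity injected at the peeled \bin\ is at most $c_{1p}^A$, and each rejection it triggers upper-bounds the capacity handed to the next \bin. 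Applying the inductive hypothesis to the reduced instance then certifies that this bounded injection cannot be amplified into a larger net gain for \b{1}, closing the chain $\sum_{j:d_j<d_p}\max(c_{1j}^B-c_{1j}^A,0)\le c_{1p}^A$.

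The main obstacle I anticipate is precisely the composition of these per-step bounds: unlike the $2\times2$ case, the cascade can fan out through several \bins\ and route less-dense capacity back toward \b{1} along more than one path, so I must argue that the separate increments still sum to at most $c_{1p}^A$ rather than accumulating. Making the reduction rigorous also requires that the peeled \bin's consumption of the denser \jobs\ be identical in runs $A$ and $B$ (so that the reduced $(n-1)\times m$ instance is well defined and the hypothesis applies), and that \b{1} remain the deviating \bin\ of that instance; verifying this run-independence and bookkeeping is the delicate heart of the step. The $2\times m$ argument supplies the base case $n-1=2$, after which the inductive step above delivers every $n\ge3$.
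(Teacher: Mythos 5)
Your plan is essentially the paper's own proof: reduce to a single hidden edge $(\b{i},\b{p})$, account for value as density-weighted capacity, bound the capacity displaced at the first \bin\ receiving \b{p}'s redirected flow by $C_{1p}\leq C_{ip}$ via Observations \ref{obs:inc-capacity} and \ref{obs:up-capacity}, then eliminate that (now full) \bin\ and invoke the inductive hypothesis on the resulting $(n-1)\times m$ market. Even the points you flag as delicate---the fan-out of the cascade and the run-independence needed to make the reduced instance well defined---are exactly the steps the paper treats informally, so your proposal matches the published argument in both structure and level of rigor.
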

\begin{proof}
Consider \bin\ \b{i} and fix the reports of other \bins\ denoted by \b{-i}. 
We assume $($\b{i},\b{p}$)$ $\in E_i$ (case \rom{1}) and show that the \bin\ will never be better off by reporting $($\b{i},\b{p}$)$ $\notin E_i$ (case \rom{2}).
We compare the utility of the \bin\ in the two cases under a fixed order of proposing \jobs. 
The two cases are depicted in Figure \ref{fig:gap-induction}. 
Since the \jobs\ before \b{p} are assigned similarly in both cases, we only consider the \jobs\ which are processed after \b{p} denoted by \b{-p}.
\begin{figure}[!htb]
    \centering

\begin{tikzpicture} [node distance = 15mm]
 \node [circle, draw, minimum size=0.8cm] (p) {\b{p}};
 \node [circle, draw, minimum size=0.8cm] (m) [below= of p] {\b{-p}};

 \node [rectangle, draw, minimum size=1cm] (1) [right=of p] {\b{i}};
 \node [rectangle, draw, minimum size=1cm] (2) [right= of m] {\b{-i}}; 
 
 \draw [<-] (1) to  
  node [xshift=7mm, yshift=2mm, very near end] {$C_{ip}$} (p);
 \draw [-] (2) to node [auto] {} (p);
 \draw [<-] (1) to   
 node [xshift=0mm, yshift=5mm, very near end] {$C_{i,-p}$} (m);
 \draw [<-] (2) to
  node [xshift=3mm, yshift=2mm, very near end] {$C_{-i,-p}$} (m); 
\end{tikzpicture}

        (a) Case \rom{1}. \b{p} is exhausted when it is assigned to \b{i}. \b{i} may get a fraction or nothing from other items \b{-p}.


\begin{tikzpicture} [node distance = 15mm]
 \node [circle, draw, minimum size=0.8cm] (p) {\b{p}};
 \node [circle, draw, minimum size=0.8cm] (m) [below= of p] {\b{-p}};

 \node [rectangle, draw, minimum size=1cm] (1) [right=of p] {\b{i}};
 \node [rectangle, draw, minimum size=1cm] (2) [right= of m] {\b{-i}};

 \draw [<-] (2) to 
 node [xshift=4mm, yshift=1mm, very near end] {$C_{-i,p}$} (p);
 \draw [<-] (1) to 
  node [xshift=0mm, yshift=5mm, very near end] {$C'_{i,-p}$} (m);
 \draw [<-] (2) to 
  node [xshift=3mm, yshift=2mm, very near end] {$C'_{-i,-p}$} (m); 
\end{tikzpicture}

        (b) Case \rom{2}. \b{i} hides its compatibility with \b{p}. At least a fraction of \b{-p} is assigned to \b{i}. \b{p} is (fully) accepted by \b{-i}.

        \caption{Two cases where the \bin\ shows or hides its compatibility with an item.}
            \label{fig:gap-induction}
\end{figure}

We show that $C'_{i,-p} \leq C_{ip}+C_{i,-p}$, where $C_{i,-p}=\sum_{q \in -p}C_{i,q}$ and $C'_{i,-p}=\sum_{q \in -p}C'_{i,q}$. 
This means that \bin\ \b{i} in case \rom{2} actually receives less capacity from \jobs\ with less (or equal) value densities than case \rom{1}, which in turn implies lower value for the \bin.

Consider case \rom{2}. 
We look closer at the bin(s) to which \job\ \b{p} will be assigned. 
We assume \b{p} is (fractionally) assigned to at least one \bin\ otherwise we have $C_{i,-p}=C'_{i,-p}$ and thus the claim holds. Let \bin\ \b{1} be the first bin to which \b{p} will be assigned. 

We assume \bin\ \b{1} gets full at some point otherwise this \bin\ accepts the extra capacity ($C_{1p}$, the capacity proposed by item \b{p} to bin \b{1}) without rejecting any capacity and therefore we have $C_{i,-p}=C'_{i,-p}$ and thus the claim holds. 
When bin \b{1} gets full, some of the currently proposing \jobs\ to \bin\ \b{1} 
will stop proposing to it and go to the next \bin\ in their preference list. 
Let us call these capacities $C_1$. 
$C_1$ is upper bounded by $C_{1p}$ according to Observation \ref{obs:inc-capacity} which in turn is upper bounded by $C_{ip}$ based on Observation \ref{obs:up-capacity}: $C_1 \leq C_{1p} \leq C_{ip}$. 
If $C_1$ directly proposes to bin \b{i}, the bin won't be better off in case \rom{2} because $C_1 \leq C_{ip}$.
The situation is worse for bin \b{i}, if $C_1$ goes to the other bins.
One can view this situation as \bin\ \b{i} rejecting capacity $C_1$ in a $(n-1) \times m$ setting where \bin\ \b{1} (which is now full) and its absorbed capacities are eliminated. 
According to our induction assumption, this strategy will not make \bin\ \b{i} better off in a $(n-1) \times m$ setting. 
This completes the proof.
\end{proof}

Taking into account, Lemma \ref{lem-2-2-truthfulness} and Lemma \ref{lem-ind-step-truthfulness}, we obtain the following.
\begin{lemma} \label{lem-equal-density-ic}
Algorithm \ref{alg:equal-density} is fractionally truthful.
\end{lemma}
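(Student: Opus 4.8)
The plan is to assemble the two preceding lemmas into a single induction on the number of bins $n$, after first reducing an arbitrary hidden-edge deviation to a sequence of single-edge deviations. The reduction works as follows. Fix a bin $i$ and the reports of the other bins, and let $E'_i \subset E_i$ be any untruthful report. Write a chain $E_i = F_0 \supset F_1 \supset \cdots \supset F_t = E'_i$ in which each step removes exactly one edge, $|F_{k-1}\setminus F_k|=1$. If I can show that hiding one further edge never increases bin $i$'s value (always measured against the true compatibilities $E$, as required by the definition of fractional truthfulness), then telescoping the $t$ inequalities gives that reporting $E_i$ is at least as good as reporting $E'_i$. Thus it suffices to prove the \emph{single-edge} statement: for any fixed reports of the other bins and any report $F$ with $e\in F\subseteq E_i$, reporting $F$ yields at least as much true value as reporting $F\setminus\{e\}$. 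This is precisely the ``case \rom{1} versus case \rom{2}'' comparison carried out in Lemma \ref{lem-2-2-truthfulness} and Lemma \ref{lem-ind-step-truthfulness}.

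I would then prove the single-edge statement by induction on $n$. The degenerate base cases are trivial: when $m=1$, hiding the lone compatible edge merely removes a possible proposal and cannot help; when $n=1$, the single bin greedily receives every compatible item up to its capacity, so hiding an edge can only lose value. For $n=2$, Lemma \ref{lem-2-2-truthfulness} establishes the $2\times 2$ case, and the straightforward generalization of its argument (sketched just before Lemma \ref{lem-ind-step-truthfulness}) extends it to $2\times m$ for every $m$, using the observation that a single-edge deviation suffices. For the inductive step, assuming the single-edge statement holds for all markets with $n-1$ bins, Lemma \ref{lem-ind-step-truthfulness} delivers it for $n\times m$ markets with $n\geq 3$ and $m\geq 2$. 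Hence the single-edge statement holds for all $n$ and $m$, and combining this with the reduction of the previous paragraph yields that Algorithm \ref{alg:equal-density} is fractionally truthful.

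The main point I would want to nail down, rather than a heavy calculation, is that the single-edge comparison in the two lemmas is genuinely \emph{report-relative}: the capacity arguments and the figures compare two instances that differ in exactly one edge of bin $i$ with everything else held fixed, and the bin's value is always evaluated against the true set $E$. Consequently the same inequality applies when the starting report $F$ is any subset of $E_i$, not only $E_i$ itself, which is exactly what makes the telescoping valid. A secondary bookkeeping point is to state the inductive invariant so that it matches the hypothesis actually used inside Lemma \ref{lem-ind-step-truthfulness} (where, in the case that the displaced capacity $C_1$ is routed to other bins, the $(n-1)\times m$ truthfulness assumption is invoked on the market obtained by deleting the now-full bin \textbf{1}). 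Phrasing the invariant as full fractional truthfulness for every market with at most $n-1$ bins, and noting via the reduction that this is equivalent to the single-edge property the lemmas prove, keeps the induction free of circularity.
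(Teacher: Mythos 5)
Your proposal is correct and follows essentially the same route as the paper: the paper likewise reduces to single-edge deviations (the remark preceding Lemma \ref{lem-ind-step-truthfulness}), settles the $2\times 2$ and $2\times m$ cases via Lemma \ref{lem-2-2-truthfulness} and its straightforward generalization, and completes the induction on the number of bins via Lemma \ref{lem-ind-step-truthfulness}. Your explicit telescoping chain and the report-relative reading of the single-edge comparison are exactly the bookkeeping the paper leaves implicit when it deduces the lemma from the two preceding ones.
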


Finally, by calling Theorem \ref{thm-frac-int-sol}, we obtain the following.

\begin{theorem} \label{thm-equ-density}
There exists a \tie\ $4$-approximation mechanism for \gapms\ when each \job\ has the same value density over all \bins.
\end{theorem}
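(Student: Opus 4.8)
The plan is to assemble the theorem from three ingredients already in place: the approximation guarantee of Algorithm \ref{alg:equal-density}, its fractional truthfulness, and the generic relaxation-and-rounding reduction of Theorem \ref{thm-frac-int-sol}. The idea is to treat Algorithm \ref{alg:equal-density} as the fractional algorithm $\mathcal{A}^F$ required by Theorem \ref{thm-frac-int-sol}, and then let the randomized meta-rounding turn its output into an integral, truthful mechanism.

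First I would record that Algorithm \ref{alg:equal-density} always returns a feasible solution $x$ to \lpe: each \job\ is assigned a total fraction of at most one across \bins, and no \bin\ exceeds its capacity $C_i$, both being invariants maintained explicitly by Step $2$. With feasibility in hand, Lemma \ref{lem-equal-density-approx} supplies the bound $\sum_{ij} v_{ij} x_{ij} \geq \frac{1}{2}\,\mathrm{OPT}_{\mathrm{LP}}$, so $\mathcal{A}^F$ is a fractional $2$-approximation to \lpe. Since \lpe\ is a relaxation, $\mathrm{OPT}_{\mathrm{LP}}$ dominates the integral optimum, and hence the algorithm's value is also within $1/2$ of the optimal total value, which is precisely the notion of approximation needed to invoke Theorem \ref{thm-frac-int-sol}.

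Next I would invoke Lemma \ref{lem-equal-density-ic}, which certifies that $\mathcal{A}^F$ is fractionally truthful: for the density-invariant instances under consideration, no \bin\ $i$ can increase its fractional value $\sum_{j:(i,j)\in E_i} v_{ij} x_{ij}$ by reporting any $E'_i \subset E_i$. This is the genuinely substantive hypothesis, and it is exactly what the $2\times 2$ base case (Lemma \ref{lem-2-2-truthfulness}) together with the inductive step (Lemma \ref{lem-ind-step-truthfulness}) are designed to establish.

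With $\mathcal{A}^F$ shown to be a fractionally truthful $2$-approximation, the theorem follows by applying Theorem \ref{thm-frac-int-sol} with $\alpha = 2$: scaling $x^* = \mathcal{A}^F(T,E)$ down by the integrality-gap factor of $2$ and decomposing $x^*/2$ via randomized meta-rounding into a distribution over feasible integral assignments yields a randomized mechanism that is always feasible, truthful in expectation, and recovers expected value $\frac{1}{2}$ of the fractional value, i.e.\ a $(2\alpha)=4$-approximation. I do not expect a genuine obstacle in this final theorem itself, since the hard work is discharged by the two lemmas; the only point to keep straight is that Theorem \ref{thm-frac-int-sol} already packages both the feasibility-preserving convex decomposition and the preservation of truthfulness under scaling and rounding, so that composing it with $\mathcal{A}^F$ requires no further argument.
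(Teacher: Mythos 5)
Your proposal is correct and follows exactly the paper's own route: Lemma \ref{lem-equal-density-approx} gives the fractional $2$-approximation, Lemma \ref{lem-equal-density-ic} (assembled from Lemmas \ref{lem-2-2-truthfulness} and \ref{lem-ind-step-truthfulness}) gives fractional truthfulness, and Theorem \ref{thm-frac-int-sol} with $\alpha=2$ yields the \tie\ $4$-approximation. Nothing in your argument deviates from or adds a gap relative to the paper's proof.
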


\subsection{Unequal Value Densities} \label{subsec-unequl-density}
We presented a \tie\ $4$-approximate mechanism for \gapms\ when each \job\ has a unique value density over all \bins. 
Now we explain how to relax this assumption at the expense of a logarithmic loss in the total value.
Consider those edges in $E$, $e=(k,l)$ and $e'=(k',l')$ whose value densities are respectively upper and lower bounds over all value densities:
\begin{displaymath}
L=\frac{v_{k'l'}}{w_{k'l'}} \leq \frac{v_{ij}}{w_{ij}} \leq \frac{v_{kl}}{w_{kl}}=U, \hspace{5pt} \forall (i,j) \in E.
\end{displaymath}
Let us assume $U$ and $L$ are publicly known.
This assumption will be removed later. 
Knowing this information we choose a density value $d$ uniformly at random from the set $D=\{U, \frac{U}{2}, \frac{U}{4}, \ldots, \frac{U}{2^{O(\ln{(U/L)})}}\}$.
Then we define a new valuation $\hat{v}$ as follows.
For every edge $(i,j)$ in $E$ with $\frac{v_{ij}}{w_{ij}} < d$ we set $\hat{v}_{ij}=0$, or equivalently the edge is discarded from the graph. 
For every $\frac{v_{ij}}{w_{ij}} \geq d$, define $\hat{v}_{ij}$ such that $\frac{\hat{v}_{ij}}{w_{ij}}=d$. 
Notice that always $\hat{v}_{ij} \leq v_{ij}$.
Now we have an instance of \gapms\ with equal densities for which there exists a \tie\ 4-approximate mechanism according to Theorem \ref{thm-equ-density}. 
To ensure truthfulness, in the end, if \job\ $j$ is assigned to \bin\ $i$ by the subroutine for equal value densities, we withdraw the \job\ with probability $1-\frac{\hat{v}_{ij}}{v_{ij}}$. 
In other words, we let the \bin\ hold the \job\ with probability $\frac{\hat{v}_{ij}}{v_{ij}}$. 
If item $j$ is assigned to bin $i$, the generated value for the bin will be $v_{ij}$, but if we let the bin hold the item with probability $\frac{\hat{v}_{ij}}{v_{ij}}$, then the expected value will be $\hat{v}_{ij}$.
This way, we make sure that each \job\ has the same value density over all \bins\ as it is required by the subroutine to guarantee truthfulness.

Set $D$ contains $O(\ln{(U/L)})$ densities, and each density has the probability of $p=\frac{1}{O(\ln{(U/L)})}$ to be chosen. 
At least half of each valuation $v_{ij}$ with probability $p$ is counted in the expected total value; therefore, we obtain an $O(\ln{(U/L)})$ approximation factor. 

To remove the assumption that $U$ and $L$ are public, we certify that the bins $k$ and $k'$ wouldn't hide the corresponding edges.
To this end, we run one of the following three algorithms with probability $1/3$.
$i)$ Let bin $k$ (the owner of edge $e$) choose all its desired items and assign nothing to the other bins.
$ii)$ Let bin $k'$ (the owner of $e'$) choose all its desired items and assign nothing to the other bins.
$iii)$ Exclude bin $k$ and $k'$ and run the algorithm above for all other bins using $U$ and $L$ obtained from the two excluded bins.
One can observe that the two bins $k$ and $k'$ cannot do any better by hiding their edges.
Also, it is easy to observe that the approximation factor is still $O(\ln{(U/L)})$. Thus, we obtain the following.

\begin{theorem}
There exists a \tie\ $O(\ln{(U/L)})$ approximate mechanism for \gapms.
\end{theorem}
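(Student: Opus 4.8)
The plan is to reduce the unequal-density case to the equal-density mechanism of Theorem \ref{thm-equ-density} by a randomized bucketing of value densities, then repair both the incentives and the objective with a compensating randomized withdrawal. Assuming for the moment that the global density bounds $L \le v_{ij}/w_{ij} \le U$ are public, I would first draw a threshold $d$ uniformly from the geometric grid $D=\{U, U/2, \ldots, U/2^{O(\ln(U/L))}\}$, which has $|D|=O(\ln(U/L))$ points and reaches down to $L$. Conditioned on $d$, I discard every reported edge with $v_{ij}/w_{ij}<d$ and rescale every surviving edge so that $\hat v_{ij}=d\,w_{ij}$; this manufactures an instance in which all live edges share the common density $d$, exactly the hypothesis of Theorem \ref{thm-equ-density}. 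I run that \tie\ $4$-approximation on the $\hat v$-instance, and whenever it assigns \job\ $j$ to \bin\ $i$ I keep the assignment only with probability $\hat v_{ij}/v_{ij}$, so that the expected \emph{true} value contributed by any held \job\ is exactly $\hat v_{ij}$.

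The truthfulness argument I would give factors cleanly over the choice of $d$. The crucial point is that the withdrawal probability $1-\hat v_{ij}/v_{ij}$ depends only on the public data $v_{ij},w_{ij}$ and on $d$, never on the \bin's report, so it cannot be manipulated; hence for each fixed $d$ the expected true utility of \bin\ $i$ equals its utility in the $\hat v$-instance solved by the equal-density subroutine. Since that subroutine is \tie\ (Theorem \ref{thm-equ-density}), for every fixed $d$ reporting the full edge set $E_i$ weakly dominates reporting any $E'_i\subset E_i$: an edge whose density is below $d$ is inert regardless of the report, whereas hiding an edge above $d$ merely removes a surviving edge, which by truthfulness of the subroutine cannot help. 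Because the single report $E'_i$ is used under every value of $d$, averaging the per-$d$ inequality over the uniform draw of $d$ preserves the domination, so the composed mechanism is \tie.

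For the ratio I would fix an optimal integral true assignment $x^*$ of value $\mathrm{OPT}$ and bound the mechanism's expected true value from below. For each edge $(i,j)$ of density $\rho_{ij}$ there is a grid point $d^\star\in D$ with $\rho_{ij}/2<d^\star\le\rho_{ij}$, and when $d=d^\star$ is drawn (probability $1/|D|$) the edge survives with $\hat v_{ij}=d^\star w_{ij}>v_{ij}/2$. Restricting $x^*$ to the edges surviving a given threshold is a feasible $\hat v$-solution, so the subroutine recovers a $1/4$ fraction of its $\hat v$-value; since the withdrawal step equates expected true value with $\hat v$-value, linearity over $d$ gives $\E[\text{value}]\ge \frac{1}{4|D|}\sum_{(i,j)\in x^*} w_{ij}\sum_{d\in D,\,d\le\rho_{ij}} d \ge \frac{1}{8|D|}\mathrm{OPT}$, an $O(\ln(U/L))$ approximation.

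Finally I would discharge the public-knowledge assumption with the three-branch construction, each run with probability $1/3$: (i) a dictatorship letting the owner $k$ of the maximum-density edge fill its own knapsack, (ii) the analogous dictatorship for the owner $k'$ of the minimum-density edge, and (iii) the randomized mechanism above applied to the remaining \bins\ with $U$ and $L$ read off from the now-revealed edges of $k$ and $k'$. Branches (i) and (ii) give $k$ and $k'$ no reason to conceal their extreme edges, since revealing them only enlarges the package each may grab, while the other \bins\ are untouched there and face the \tie\ mechanism of branch (iii); the extra factor $3$ is absorbed into $O(\ln(U/L))$. I expect the delicate step to be the incentive compatibility of the composition rather than the counting: one must check that the withdrawal neither reallocates the freed capacity nor couples the \bin's incentives across different $d$, and that the per-$d$ guarantee of Theorem \ref{thm-equ-density} genuinely lifts to the true valuations through the identity ``expected held value $=\hat v_{ij}$''. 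Once that identity and the report's independence of $d$ license the term-by-term averaging, both the \tie\ property and the $O(\ln(U/L))$ bound follow by linearity of expectation.
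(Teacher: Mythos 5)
Your proposal is correct and follows essentially the same route as the paper: the same geometric grid $D$ of thresholds, the same rescaling $\hat v_{ij}=d\,w_{ij}$ with discarding of low-density edges, the same randomized withdrawal with probability $1-\hat v_{ij}/v_{ij}$ to restore truthfulness via the equal-density subroutine of Theorem \ref{thm-equ-density}, and the identical three-branch construction to dispense with public knowledge of $U$ and $L$. Your write-up is in fact somewhat more explicit than the paper's on the two points it treats tersely --- the factoring of the truthfulness argument over the draw of $d$ and the $\frac{1}{8|D|}\mathrm{OPT}$ accounting --- but these are elaborations of the same argument, not a different one.
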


We leave open the question of whether there exists a truthful mechanism with a constant factor of approximation for \gapms.

\vspace{3pt}
\hspace{-0.7cm}{\bf Acknowledgement}

The authors gratefully acknowledge support of the Deutsche Forschungsgemeinschaft (DFG) (BI 1057/7-1).
The authors also would like to acknowledge the
anonymous referees for the their useful comments and suggestions.

  \bibliographystyle{elsarticle-num} 
  \bibliography{literature}

\appendix

\section{}

Proof of Theorem \ref{thm-low-bounds} 
\emph{No truthful deterministic algorithm with an approximation ratio better than $2$ exists for \gapms.
Moreover, no truthful-in-expectation randomized algorithm with an approximation ratio better than $1.09$ exists for \gapms.}

\begin{proof}
Consider a small market with two bins and two items shown in Figure \ref{fig-gap-wo-money-low-bound} (a).
In this market, bins have capacity $1$, and are both compatible with the two items.
Item $B$ is more valuable to both bins ($x>1$), but each item has size $1$.
This market can be viewed as an instance for both the multiple-knapsack problem (Subsection \ref{subsec:mkp}), and the density-invariant GAP (Subsection \ref{subsec:inv-val-dens}).

Regarding deterministic mechanisms, an arbitrary truthful mechanism has to assign item $B$ to one bin.
Without loss of generality, assume $B$ is assigned to bin \b{2}, i.e., the tie is broken deterministically (alphabetically) in favor of bin \b{2}.
Now, consider reports in (b) of Figure \ref{fig-gap-wo-money-low-bound}.
The mechanism, in case (b), cannot assign $B$ to bin \b{1} as it violates truthfulness.
Thus, the mechanism assigns $B$ to \b{2}, and this results in an approximation ratio of $\frac{x+1}{x}$ which tends to $2$ when $x$ gets very close to $1$.

An arbitrary truthful-in-expectation mechanism, in case (a), assigns $B$ to one bin with a probability less than or equal $1/2$.
Without loss of generality, let bin \b{1} be that bin.
The utility of bin \b{1}, in this case, will be at most $\frac{x+1}{2}$ for $x>1$.
Assume, in case (b), item $B$ is assigned to bin \b{1} with probability $q$, resulting a $q\cdot x$ expected value for bin \b{1}.
Truthfulness stipulates no increase in the utility of bin \b{1} in case (b), i.e., $\frac{x+1}{2} \geq q\cdot x$, thus $q \leq \frac{x+1}{2x}$.
In case (b), the total expected value will be $q(1+x)-(1-q)x = x+q$, thus the approximation ratio will be $\frac{x+1}{x+q}$.
In order to obtain a smaller approximation ratio, we plug in $q=\frac{x+1}{2x}$.
The ratio gets a value of $1+\frac{1}{4\sqrt{2}+5} \approx 1.094$ for $x=1+\sqrt{2}$, the desired conclusion.

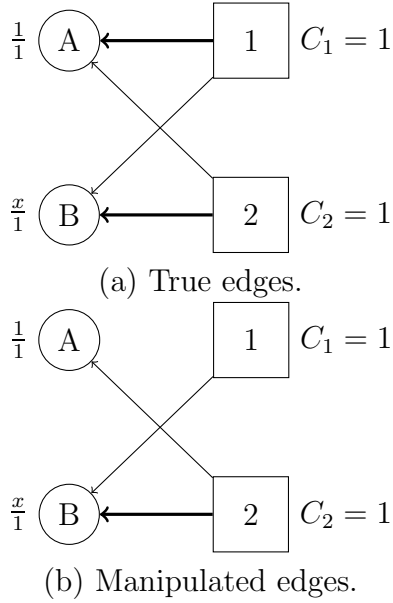
\begin{figure}[!htb]
    \centering

\begin{tikzpicture} [node distance = 15mm]
 \node [circle, draw, minimum size=0.8cm] (A) {A};
 \node [left=0mm of A]{$\frac{1}{1}$};
 \node [circle, draw, minimum size=0.8cm] (B) [below= of A] {B};
 \node [left=0mm of B]{$\frac{x}{1}$};

 \node [rectangle, draw, minimum size=1cm] (1) [right=of A] {1};
 \node [rectangle, draw, minimum size=1cm] (2) [right= of B] {2}; 
 \node [right=0mm of 1]{$C_1=1$}; 
 \node [right=0mm of 2]{$C_2=1$}; 
 
 \draw [->, very thick] (1) to node [auto] {} (A);
 \draw [->] (1) to (B);
 \draw [->, very thick] (2) to (B); 
 \draw [->] (2) to (A);
\end{tikzpicture}

        (a) True edges.

\begin{tikzpicture} [node distance = 15mm]
 \node [circle, draw, minimum size=0.8cm] (A) {A};
 \node [left=0mm of A]{$\frac{1}{1}$};
 \node [circle, draw, minimum size=0.8cm] (B) [below= of A] {B};
 \node [left=0mm of B]{$\frac{x}{1}$};

 \node [rectangle, draw, minimum size=1cm] (1) [right=of A] {1};
 \node [rectangle, draw, minimum size=1cm] (2) [right= of B] {2}; 
 \node [right=0mm of 1]{$C_1=1$}; 
 \node [right=0mm of 2]{$C_2=1$}; 
 
 \draw [->] (1) to (B);
 \draw [->, very thick] (2) to (B); 
 \draw [->] (2) to (A);
\end{tikzpicture}

        (b) Manipulated edges.

        \caption{Circles represent \jobs\ and squares represent \bins. The $\frac{\text{value}}{\text{size}}$ of each \job\ is on its left. 
        Each bin has a capacity of $1$. Selected assignments are in bold.}
            \label{fig-gap-wo-money-low-bound}
\end{figure}

\end{proof}

Proof of Theorem \ref{thm-frac-int-sol}.
\emph{If there exists a fractionally truthful $\alpha$-approximation algorithm for \gapms, then there exists a \tie\ $(2\alpha)$-approximation solution for \gapms.}
\begin{proof}
Let $\mathcal{A}^F$ denote a fractionally truthful algorithm for \gapms\ that takes an instance $(T,E)$ and returns a feasible solution to \lpe.
Let $x^*$ be the outcome of $\mathcal{A}^F$ on instance $(T,E)$.
Let $\{X^l\}_{l\in L}$ denote the set of feasible integer solutions to \lpe, where $L$ indexes all feasible integer solutions.
The  integrality gap of \lpe\ equals $2$ \citep{shmoys1993approximation}, thus we scale down the fractional solution by factor $2$.
The meta-randomized rounding applied to $x^*/2$ computes a probability distribution over feasible integer solutions whose support is polynomial \citep{carr2000randomized, lavi2011truthful}:
\begin{displaymath}
\begin{array}{lll}
\frac{x^*}{2}=\sum_{l\in L} \lambda_l X^l, &\ \sum_{l\in L} \lambda_l=1,\ \text{and} &\ \forall l\in L, \lambda_l\geq 0.
\end{array}
\end{displaymath}

We treat the convex decomposition above as a probability distribution according to which solution $X^l$ has probability $\lambda_l$ of being selected.
Let $X$ be a solution sampled from the above distribution.
Obviously $X$ is feasible by the construction of the distribution.
We also have $\E[X_{ij}]=\frac{1}{2}x^*_{ij}$ for all $i$ and $j$ from the construction of the distribution.
By the linearity of expectation, the expected value of a bin is $\E[\sum_{j:(i,j)\in E}v_{ij} X_{ij}] = \frac{1}{2}\sum_{j:(i,j)\in E}v_{ij} x^*_{ij}$.
Therefore, the expected value of the solution is exactly half of the value of the fractional solution.

For truthfulness, fix bin $i$ and $E_{-i}=E\setminus E_i$.
Suppose the bin reports $E'_i \subset E_i$ rather than $E_i$.
Let $x'= \mathcal{A}^F(T, E'_i \cup E_{-i})$, and $X'$ be the solution returned by the meta-randomized rounding from $x'/2$.
We have
\begin{displaymath}
\begin{array}{ll}
\E[\sum_{j:(i,j)\in E}v_{ij} X_{ij}] &= \frac{1}{2}\sum_{j:(i,j)\in E}v_{ij} x^*_{ij} \\
&\geq \frac{1}{2}\sum_{j:(i,j)\in E}v_{ij} x'_{ij} \\
&=\E[\sum_{j:(i,j)\in E}v_{ij} X'_{ij}]
\end{array}
\end{displaymath}
The inequality is because $\mathcal{A}^F$ is fractionally truthful.
Therefore, the bin cannot improve its expected value by hiding some of its edges.
This completes the proof.
\end{proof}


Proof of Lemma \ref{lem-equal-density-approx}.
\emph{Algorithm \ref{alg:equal-density} returns a $2$-approximation solution to \lpe\ when each \job\ has the same value density over \bins.}
\begin{proof}
An argument similar to that of Lemma \ref{lem-mkp-ratio} in addition to some required modifications will show the claim.
Assume $x$ is the outcome of Algorithm \ref{alg:equal-density}. Using $x$ we can construct a feasible solution to the dual of LP[E] (LPD[E] given below) which is not greater than twice the value of $x$. Then we call the weak LP-duality theorem and conclude that $x$ is a $2$ approximate solution to LP[E].
  \begin{alignat*}{2}
    \text{LPD[E]:} & \\
    \text{Minimize}   & \textstyle \quad \sum_{j=1}^m p_j + \sum_{i=1}^n u_i C_i  \\
     \text{subject to} & \textstyle \quad p_j+u_i w_{ij} \geq v_{ij}, \ & \forall (i,j) \in E \\
	 & u_i \geq 0, \ & \forall i \\
	 & p_j \geq 0, \ & \forall j.\\
  \end{alignat*}
 
Initially, let $p=\vec{0}$ and $u=\vec{0}$. 
If \job\ $j$ gets exhausted when assigned to \bin\ $i$, set $p_j=v_{ij}$. 
Furthermore, for all \textit{full} \bins\ $i$, set $u_i=d_j$, $j$ being the last \job\ (fractionally) assigned to $i$. 
We can observe that this satisfies the constraint corresponding to each edge $(i,j)$. 
In particular, if \bin\ $i$ is full, then for each $j$ incident on $i$, $j$ either gets exhausted with this assignment or does not. 
If $j$ is exhausted, we have $p_j=v_{ij}$ and therefore the constraint holds.
If $j$ is not exhausted, we have $v_{ij}/w_{ij}=d_j \leq u_i$ since \jobs\ are assigned in decreasing order of value density and thus the constraint holds. 
If \bin\ $i$ is not full, every \job\ $j$ which is assigned to it is exhausted by this assignment. 
That is we have $p_j=v_{ij}$ and the constraint thus holds. 
For every \job\ $j$ which is not assigned to the \bin\ but $(i,j) \in E$, we have $p_j \geq v_{ij}$ since the \job\ is exhausted due to an assignment $(i',j) \in E$ with $v_{i'j}\geq v_{ij}$.
Therefore, we have constructed a feasible dual solution using $x$.

Now, we bound the value of the dual solution with respect to the primal solution. 
First, we observe that $\sum_{i,j} v_{ij}x_{ij} \geq \sum_j p_j \sum_i x_{ij}$, since $p_j$ lower bounds the value of any edge on which any part of \job\ $j$ is assigned ($x_{ij}>0$) because the \job\ goes to \bins\ according to the order specified by $\mathcal{L}_j$. 
Second, $\sum_{i,j} v_{ij}x_{ij}=\sum_i \sum_j \frac{v_{ij}}{w_{ij}} (w_{ij}x_{ij})\geq \sum_i u_i \sum_j (w_{ij}x_{ij})$, since if $x_{ij}>0$ then $\frac{v_{ij}}{w_{ij}}= d_j \geq u_i$. 
Therefore, we obtain

\begin{displaymath}
\begin{array}{ll}

2\sum_{i,j} v_{ij}x_{ij} &\geq  \sum_j p_j \sum_i x_{ij} + \sum_i u_i \sum_j (w_{ij}x_{ij}) \\
 & = \sum_j p_j + \sum_i u_i C_i

\end{array}
\end{displaymath}

Notice, only for \job\ $j$ which gets exhausted ($\sum_i x_{ij}=1$), we have $p_j>0$ and only for full \bins\ ($\sum_j w_{j}x_{ij}=C_i$) we have $u_i>0$.
The final term is the value of the dual, the desired conclusion.
\end{proof}

\begin{example}[Multiple Knapsack Example]\label{mkp-example}
we observe an algorithm that returns a (fractional) optimal solution to \lpe\ is not fractionally truthful for the multiple knapsack problem.  
This can be seen in the example shown in Figure \ref{fig-mkp-counter-example}.
In (a) of this figure, the edges are reported truthfully, and the value-maximizing allocation, assigns $A$ to \bin\ \b{1} and the other item to the other bin.
In (b) of this figure, \bin\ \b{1} hides its compatibility with item A and as a consequence it is better off (in expectation) when the mechanism maximizes the total value. 
In (b) of this figure, the tie can be broken randomly or deterministically (alphabetically) in favor of \bin\ \b{1}. In any case, \bin\ \b{1} is better off by manipulation.
  
\begin{figure}[!htb]
    \centering

\begin{tikzpicture} [node distance = 15mm]
 \node [circle, draw, minimum size=0.8cm] (A) {A};
 \node [left=0mm of A]{$\frac{1+\varepsilon}{1}$};
 \node [circle, draw, minimum size=0.8cm] (B) [below= of A] {};
 \node [left=0mm of B]{$\frac{x}{1}$};

 \node [rectangle, draw, minimum size=1cm] (1) [right=of A] {1};
 \node [rectangle, draw, minimum size=1cm] (2) [right= of B] {}; 
 \node [right=0mm of 1]{$C_1=1$}; 
 \node [right=0mm of 2]{$C_2=1$}; 
 
 \draw [->, very thick] (1) to node [auto] {} (A);
 \draw [->] (1) to (B);
 \draw [->, very thick] (2) to (B); 
\end{tikzpicture}

        (a) True edges.

\begin{tikzpicture} [node distance = 15mm]
 \node [circle, draw, minimum size=0.8cm] (A) {A};
 \node [left=0mm of A]{$\frac{1+\varepsilon}{1}$};
 \node [circle, draw, minimum size=0.8cm] (B) [below= of A] {};
 \node [left=0mm of B]{$\frac{x}{1}$};

 \node [rectangle, draw, minimum size=1cm] (1) [right=of A] {1};
 \node [rectangle, draw, minimum size=1cm] (2) [right= of B] {}; 
 \node [right=0mm of 1]{$C_1=1$}; 
 \node [right=0mm of 2]{$C_2=1$}; 
 
 \draw [->, very thick] (1) to (B);
 \draw [->, very thick] (2) to (B); 
\end{tikzpicture}

        (b) Manipulated edges.

        \caption{Circles represent \jobs\ and squares represent \bins. The value/size of each \job\ is on its left. Value maximizing assignments are in bold. $x \gg 1$.}
            \label{fig-mkp-counter-example}
\end{figure}

\end{example}

\end{document}